\providecommand{\U}[1]{\protect\rule{.1in}{.1in}}
\newtheorem{theorem}{Theorem}
\newtheorem{corollary}[theorem]{Corollary}
\newtheorem{definition}[theorem]{Definition}
\newtheorem{remark}[theorem]{Remark}
\newenvironment{proof}[1][Proof]{\noindent\textbf{#1.} }{\ \rule{0.5em}{0.5em}}
\numberwithin{equation}{section}
\begin{document}

\title{\textbf{Recoverability in quantum information theory}}
\author{Mark M. Wilde\thanks{Hearne Institute for Theoretical Physics, Department of
Physics and Astronomy, Center for Computation and Technology, Louisiana State
University, Baton Rouge, Louisiana 70803, USA}}
\maketitle

\begin{abstract}
The fact that the quantum relative entropy is non-increasing with respect to
quantum physical evolutions lies at the core of many optimality theorems in
quantum information theory and has applications in other areas of physics. In
this work, we establish improvements of this entropy inequality in the form of
physically meaningful remainder terms. One of the main results can be
summarized informally as follows: if the decrease in quantum relative entropy
between two quantum states after a quantum physical evolution is relatively
small, then it is possible to perform a recovery operation, such that one can
perfectly recover one state while approximately recovering the other. This can
be interpreted as quantifying how well one can reverse a quantum physical
evolution. Our proof method is elementary, relying on the method of complex
interpolation, basic linear algebra, and the recently introduced R\'{e}nyi
generalization of a relative entropy difference.\ The theorem has a number of
applications in quantum information theory, which have to do with providing
physically meaningful improvements to many known entropy inequalities.

\end{abstract}

\section{Introduction}

Entropy inequalities are foundational in quantum information theory
\cite{NC10,W13}, giving limitations not only on which kinds of physical
evolutions are possible in principle but also on efficiencies of communication
tasks. More generally and for similar reasons, these inequalities find
application in many areas of physics such as thermodynamics
\cite{Brandao17032015}, condensed matter \cite{fradkin2013field}, and black
hole physics \cite{AMPS13} to name a few. The most prominent entropy
inequalities are the non-increase of quantum relative entropy with respect to
the application of a quantum channel \cite{Lindblad1975,U77}\ and the strong
subadditivity of quantum entropy \cite{PhysRevLett.30.434,LR73}. In fact,
these inequalities are known to be equivalent to each other.

A recent line of research, which has in part been motivated by the posting
\cite{Winterconj}, has been to establish physically meaningful refinements of
these entropy inequalities. For example, one might think that if the amount by
which the quantum relative entropy decreases is not very much, then it might
be possible to perform a recovery channel to reverse the action of the
original one. In fact, one of the earliest results in this spirit is due to
Petz \cite{Petz1986,Petz1988}, who showed that perfect reversal of a channel
acting on two given states is possible if and only if the relative entropy
decrease is equal to zero. Furthermore, he gave an explicit construction of
the recovery channel which does so (now called the \textit{Petz recovery
map}), such that it depends on the original channel and one of the states used
to evaluate the quantum relative entropy. These results were later extended in
order to elucidate the structure of quantum states that saturate the strong
subadditivity inequality \cite{HJPW04} and the structure of states and
channels which saturate the non-increase of quantum relative entropy
inequality \cite{Mosonyi2004,M05}.

The \textquotedblleft perfect saturation\textquotedblright\ results quoted
above are interesting from a fundamental perspective but seem to have little
bearing in applications. That is, one might wonder if the results still hold
in some form when the entropy inequalities are not fully saturated but are
instead nearly saturated. After an initial negative result in this direction
\cite{ILW08}, a breakthrough result \cite{FR14}\ established a long desired
refinement of the strong subadditivity inequality. In particular, the new
contribution showed that if the strong subadditivity inequality is nearly
saturated, then the relevant tripartite state is an approximate quantum Markov
chain, in the sense that it is possible to recover one system by acting
exclusively on one other system while at the same time preserving the
correlations with a third. Later work has further elucidated the form of the
recovery channel used in approximate quantum Markov chains \cite{SOR15}. The
result from \cite{FR14} has now found a number of applications in quantum
information theory \cite{LW14,SBW14,SW14,W14}\ and is expected to find more in
other areas of physics.

The main contribution of the present paper is to establish physically
meaningful refinements of the non-increase of quantum relative entropy with
respect to quantum channels. One of the main results can be summarized
informally as follows: if the decrease in quantum relative entropy between two
quantum states after a quantum channel acts is relatively small, then it is
possible to perform a recovery operation, such that one can perfectly recover
one state while approximately recovering the other. A significant advantage of
the proof detailed here is that it is elementary, relying on standard methods
from the theory of complex interpolation \cite{BL76,RS75}, basic linear
algebra, and the notion of a R\'{e}nyi generalization of a relative entropy
difference \cite{SBW14}. The refinement of strong subadditivity from
\cite{FR14}\ is now a corollary of Theorem~\ref{thm:rel-ent} presented here,
but it remains open to determine whether the converse implication is true or
whether the more general refinement of strong subadditivity from \cite{SOR15}
can be obtained from Theorem~\ref{thm:rel-ent}. Furthermore, the recovery
channel given here obeys desirable \textquotedblleft
functoriality\textquotedblright\ properties discussed in \cite{LW14}, which
allows for Theorem~\ref{thm:rel-ent} to be applied in a wide variety of contexts.

We begin in the next section with some brief background material and a
statement of the operator Hadamard three-line theorem.
Section~\ref{sec:main-result}\ details our main result
(Theorem~\ref{thm:rel-ent}) and Section~\ref{sec:func}\ details the
functoriality properties of the recovery channel presented here.
Section~\ref{sec:corollaries}\ shows how many refinements of entropy
inequalities follow as corollaries of Theorem~\ref{thm:rel-ent}. We conclude
in Section~\ref{sec:discuss}\ with a discussion and some open questions.

\section{Background}

For more background on quantum information theory, we refer to the books
\cite{NC10,W13}. Throughout the paper, we deal with density operators and
quantum channels. We restrict our developments to finite-dimensional Hilbert
spaces, even though it should be possible to extend some of the results here
to separable Hilbert spaces. (We leave this for future developments.) Density
operators are positive semi-definite operators with trace equal to one---they
represent the state of a quantum system. Quantum channels are linear
completely positive trace-preserving maps taking density operators in one
quantum system to those in another.

An important technical tool in this work is the Schatten $p$-norm of an
operator $A$, defined as%
\begin{equation}
\left\Vert A\right\Vert _{p}\equiv\left[  \operatorname{Tr}\left\{  \left\vert
A\right\vert ^{p}\right\}  \right]  ^{1/p},
\end{equation}
where $|A|\equiv\sqrt{A^{\dag}A}$ and $p\geq1$. The convention is for
$\left\Vert A\right\Vert _{\infty}$ to be defined as the largest singular
value of $A$ because $\left\Vert A\right\Vert _{p}$ converges to this in the
limit as $p\rightarrow\infty$. In the proof of our main result
(Theorem~\ref{thm:rel-ent}), we repeatedly use the fact that $\left\Vert
A\right\Vert _{p}$ is unitarily invariant. That is, $\left\Vert A\right\Vert
_{p}$ is invariant with respect to linear isometries, in the sense that%
\begin{equation}
\left\Vert A\right\Vert _{p}=\left\Vert UAV^{\dag}\right\Vert _{p},
\end{equation}
where $U$ and $V$ are linear isometries satisfying $U^{\dag}U=I$ and $V^{\dag
}V=I$. From these norms, one can define information measures relating quantum
states and channels, with the main one used here known as a R\'{e}nyi
generalization of a relative entropy difference \cite{SBW14}, recalled in the
next section. A special case of this is the R\'{e}nyi conditional mutual
information defined in \cite{BSW14}. The structure of the paper is to present
information measures as we need them, rather than recalling all of them in one place.

Throughout we adopt the usual convention and define $f\left(  A\right)  $ for
a function $f$ and a positive semi-definite operator $A$ as follows:%
\begin{equation}
f\left(  A\right)  \equiv\sum_{i}f\left(  \lambda_{i}\right)  \left\vert
i\right\rangle \left\langle i\right\vert ,
\end{equation}
where $A=\sum_{i}\lambda_{i}\left\vert i\right\rangle \left\langle
i\right\vert $ is a spectral decomposition of $A$ such that $\lambda_{i}\neq0$
for all $i$. We let $\Pi_{A}$ denote the projection onto the support of $A$.

Another important technical tool for proving our main result is the operator
version of the Hadamard three-line theorem given in \cite{B13}, in particular,
the very slight modification stated in \cite{D14}. We note that the theorem
below is a variant of the Riesz-Thorin operator interpolation theorem (see,
e.g., \cite{BL76,RS75}).

\begin{theorem}
\label{thm:hadamard}Let%
\begin{equation}
S\equiv\left\{  z\in\mathbb{C}:0\leq\operatorname{Re}\left\{  z\right\}
\leq1\right\}  , \label{eq:strip}%
\end{equation}
and let $L\left(  \mathcal{H}\right)  $ be the space of bounded linear
operators acting on a Hilbert space $\mathcal{H}$. Let $G:S\rightarrow
L\left(  \mathcal{H}\right)  $ be a bounded map that is holomorphic on the
interior of $S$ and continuous on the boundary.\footnote{A map $G:S\rightarrow
L(\mathcal{H})$ is holomorphic (continuous, bounded) if the corresponding
functions to matrix entries are holomorphic (continuous, bounded).} Let
$\theta\in\left(  0,1\right)  $ and define $p_{\theta}$ by%
\begin{equation}
\frac{1}{p_{\theta}}=\frac{1-\theta}{p_{0}}+\frac{\theta}{p_{1}},
\label{eq:p-relation}%
\end{equation}
where $p_{0},p_{1}\in\lbrack1,\infty]$. For $k=0,1$ define%
\begin{equation}
M_{k}=\sup_{t\in\mathbb{R}}\left\Vert G\left(  k+it\right)  \right\Vert
_{p_{k}}.
\end{equation}
Then%
\begin{equation}
\left\Vert G\left(  \theta\right)  \right\Vert _{p_{\theta}}\leq
M_{0}^{1-\theta}M_{1}^{\theta}. \label{eq:hadamard-3-line}%
\end{equation}

\end{theorem}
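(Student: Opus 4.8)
The plan is to deduce the operator inequality \eqref{eq:hadamard-3-line} from the ordinary scalar Hadamard three-line theorem (see, e.g., \cite{BL76,RS75}) by a duality argument for Schatten norms. Write $A\equiv G(\theta)$, and assume without loss of generality that $A\neq0$. First I would fix a polar decomposition $A=VP$, where $P=\vert A\vert\geq0$ and $V$ is a partial isometry with $V^{\dag}V=\Pi_{P}$, together with a spectral decomposition $P=\sum_{i}\lambda_{i}\,\vert i\rangle\langle i\vert$ in which all $\lambda_{i}>0$. Let $q_{0},q_{1},q$ denote the H\"older conjugates of $p_{0},p_{1},p_{\theta}$, so that $1/p_{k}+1/q_{k}=1$ and $1-1/q=1/p_{\theta}$, and set $1/q(z)\equiv(1-z)/q_{0}+z/q_{1}$. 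This quantity is affine in $z$, equals $1/q$ at $z=\theta$ by \eqref{eq:p-relation}, and has constant real part $1/q_{k}$ along each vertical line $\operatorname{Re}\{z\}=k$, $k\in\{0,1\}$.

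Next I would introduce the operator-valued function
\[
B(z)\equiv\left(\sum_{i}\lambda_{i}^{\,p_{\theta}/q(z)}\,\vert i\rangle\langle i\vert\right)V^{\dag},\qquad z\in S,
\]
which is entire in $z$ --- each $\lambda_{i}^{\,p_{\theta}/q(z)}=e^{(p_{\theta}/q(z))\log\lambda_{i}}$ is entire since $\lambda_{i}>0$ --- and bounded on $S$, because $\vert\lambda_{i}^{\,p_{\theta}/q(z)}\vert=\lambda_{i}^{\,p_{\theta}\operatorname{Re}\{1/q(z)\}}$ with $\operatorname{Re}\{1/q(z)\}$ confined to the interval with endpoints $1/q_{0}$ and $1/q_{1}$ for $z\in S$. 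It follows that $h(z)\equiv\operatorname{Tr}\{B(z)G(z)\}$ is bounded on $S$, holomorphic on its interior, and continuous on its boundary, so the scalar Hadamard three-line theorem applies to $h$.

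The heart of the argument is the evaluation of $h$ at $z=\theta$ together with its estimation on the two boundary lines. At $z=\theta$ one has $p_{\theta}/q(\theta)=p_{\theta}/q=p_{\theta}-1$, hence $B(\theta)=P^{p_{\theta}-1}V^{\dag}$ and therefore $h(\theta)=\operatorname{Tr}\{P^{p_{\theta}-1}V^{\dag}VP\}=\operatorname{Tr}\{P^{p_{\theta}}\}=\Vert A\Vert_{p_{\theta}}^{p_{\theta}}$. On the line $\operatorname{Re}\{z\}=k$, H\"older's inequality for Schatten norms (valid since $1/p_{k}+1/q_{k}=1$) gives $\vert h(k+it)\vert\leq\Vert B(k+it)\Vert_{q_{k}}\,\Vert G(k+it)\Vert_{p_{k}}\leq M_{k}\,\Vert B(k+it)\Vert_{q_{k}}$; and because right multiplication by the partial isometry $V^{\dag}$ leaves the nonzero singular values unchanged while $\operatorname{Re}\{1/q(k+it)\}=1/q_{k}$, one computes $\Vert B(k+it)\Vert_{q_{k}}^{q_{k}}=\sum_{i}\lambda_{i}^{p_{\theta}}=\Vert A\Vert_{p_{\theta}}^{p_{\theta}}$, so that $\Vert B(k+it)\Vert_{q_{k}}=\Vert A\Vert_{p_{\theta}}^{p_{\theta}/q_{k}}$. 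Feeding the boundary estimates $\sup_{t}\vert h(k+it)\vert\leq M_{k}\Vert A\Vert_{p_{\theta}}^{p_{\theta}/q_{k}}$ into the scalar three-line bound for $h$, and using $(1-\theta)/q_{0}+\theta/q_{1}=1/q$, would yield $\Vert A\Vert_{p_{\theta}}^{p_{\theta}}=\vert h(\theta)\vert\leq M_{0}^{1-\theta}M_{1}^{\theta}\,\Vert A\Vert_{p_{\theta}}^{p_{\theta}/q}$; dividing through by $\Vert A\Vert_{p_{\theta}}^{p_{\theta}/q}$ and using $1-1/q=1/p_{\theta}$ then gives exactly \eqref{eq:hadamard-3-line}.

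The step that demands care --- really careful bookkeeping rather than a genuine obstacle --- is the choice of the complex exponent $p_{\theta}/q(z)$ in $B(z)$, calibrated so that $B(\theta)$ is precisely the operator dual to $A$ with respect to the $p_{\theta}$-norm while $B$ keeps a controlled $q_{k}$-norm on each boundary line; the structural fact that makes everything fit together is that $\operatorname{Re}\{1/q(z)\}$ is constant along each vertical line $\operatorname{Re}\{z\}=k$. I would treat the degenerate endpoint case $p_{0}=p_{1}=\infty$ (for which $p_{\theta}=\infty$ and the normalization above breaks down) separately, by applying the scalar three-line theorem directly to the bounded holomorphic function $z\mapsto\langle\psi\vert G(z)\vert\phi\rangle$ for arbitrary unit vectors $\psi$ and $\phi$ and then taking the supremum over $\psi$ and $\phi$.
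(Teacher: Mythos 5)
Your duality argument is correct, and it is essentially the proof the paper defers to: the paper does not prove Theorem~\ref{thm:hadamard} itself but cites \cite{B13,D14}, and the route you take---introducing the dual operator family $B(z)$ with exponent calibrated so that $B(\theta)$ is the $p_{\theta}$-duality partner of $G(\theta)$, applying the scalar three-line theorem to $\operatorname{Tr}\{B(z)G(z)\}$, and using Schatten--H\"older on the boundary lines---is the standard argument from those references. All the key identities check out: $\operatorname{Re}\{1/q(z)\}$ constant on vertical lines, $h(\theta)=\Vert G(\theta)\Vert_{p_{\theta}}^{p_{\theta}}$, $\Vert B(k+it)\Vert_{q_{k}}=\Vert G(\theta)\Vert_{p_{\theta}}^{p_{\theta}/q_{k}}$ (the partial isometry $V^{\dag}$ not changing nonzero singular values since $\operatorname{supp}(|G(\theta)|)$ matches the support of the diagonal factor), and the final exponent bookkeeping via $(1-\theta)/q_{0}+\theta/q_{1}=1/q$; the degenerate case $p_{0}=p_{1}=\infty$ is properly handled separately. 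The only caveat is that your spectral sum with $\lambda_{i}>0$ and the trace-class manipulations for $h(z)$ implicitly assume the finite-dimensional (or at least compact/Schatten-class) setting; since the paper explicitly restricts to finite-dimensional Hilbert spaces and the footnote interprets holomorphy entrywise, this is not a gap in the context where the theorem is used, though a fully general infinite-dimensional statement would need additional care.
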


\section{Bounds for a difference of quantum relative entropies}

\label{sec:main-result}This section presents our main result
(Theorem~\ref{thm:rel-ent}), which is a refinement of the monotonicity of
quantum relative entropy. For the lower bounds given in this paper, we take
states $\rho$ and $\sigma$ and the channel $\mathcal{N}$ to be as given in the
following definition:

\begin{definition}
\label{def:rho-sig-N}Let $\rho$ be a density operator and let $\sigma$ be a
positive semi-definite operator, each acting on a finite-dimensional Hilbert
space $\mathcal{H}_{S}$ and such that $\operatorname{supp}\left(  \rho\right)
\subseteq\operatorname{supp}\left(  \sigma\right)  $. Let $\mathcal{N}%
:L\left(  \mathcal{H}_{S}\right)  \rightarrow L\left(  \mathcal{H}_{B}\right)
$ be a quantum channel with finite-dimensional output Hilbert space
$\mathcal{H}_{B}$.
\end{definition}

A R\'{e}nyi generalization of a relative entropy difference is defined as
\cite{SBW14}%
\begin{equation}
\widetilde{\Delta}_{\alpha}\left(  \rho,\sigma,\mathcal{N}\right)  \equiv
\frac{2\alpha}{\alpha-1}\log\left\Vert \left(  \left[  \mathcal{N}\left(
\rho\right)  \right]  ^{\left(  1-\alpha\right)  /2\alpha}\left[
\mathcal{N}\left(  \sigma\right)  \right]  ^{\left(  \alpha-1\right)
/2\alpha}\otimes I_{E}\right)  U_{S\rightarrow BE}\sigma^{\left(
1-\alpha\right)  /2\alpha}\rho^{1/2}\right\Vert _{2\alpha},
\label{eq:renyi-diff}%
\end{equation}
where here and throughout this paper $\log$ denotes the natural logarithm,
$\alpha\in(0,1)\cup(1,\infty)$, and $U_{S\rightarrow BE}$ is an isometric
extension of the channel $\mathcal{N}$. That is, $U_{S\rightarrow BE}$ is a
linear isometry satisfying%
\begin{equation}
\operatorname{Tr}_{E}\left\{  U_{S\rightarrow BE}\left(  \cdot\right)
_{S}U_{S\rightarrow BE}^{\dag}\right\}  =\mathcal{N}\left(  \cdot\right)
,\ \ \ \ \ \,\,\,\,U_{S\rightarrow BE}^{\dag}U_{S\rightarrow BE}=I_{S}.
\end{equation}
All isometric extensions of a channel are related by an isometry acting on the
environment system~$E$, so that the definition in \eqref{eq:renyi-diff} is
invariant under any such choice. Recall also that the adjoint $\mathcal{N}%
^{\dag}$\ of a channel is given in terms of an isometric extension $U$ as%
\begin{equation}
\mathcal{N}^{\dag}\left(  \cdot\right)  =U^{\dag}\left(  \left(  \cdot\right)
\otimes I_{E}\right)  U.
\end{equation}
(This can be used to verify that the definition given in \eqref{eq:renyi-diff}
is the same as the definition given in \cite{SBW14}.)

The following limit is known for positive definite operators \cite[Section
6]{SBW14} and we provide a proof in Appendix~\ref{app:limit}\ that it holds
for $\rho$, $\sigma$, and $\mathcal{N}$ as given in
Definition~\ref{def:rho-sig-N}:%
\begin{equation}
\lim_{\alpha\rightarrow1}\widetilde{\Delta}_{\alpha}\left(  \rho
,\sigma,\mathcal{N}\right)  =D\left(  \rho\Vert\sigma\right)  -D\left(
\mathcal{N}\left(  \rho\right)  \Vert\mathcal{N}\left(  \sigma\right)
\right)  .\label{eq:rel-ent-diff-a-1}%
\end{equation}
It is one reason why we say that $\widetilde{\Delta}_{\alpha}\left(
\rho,\sigma,\mathcal{N}\right)  $ is a R\'{e}nyi generalization of a relative
entropy difference, in addition to the fact that $\widetilde{\Delta}_{\alpha
}\left(  \rho,\sigma,\mathcal{N}\right)  \geq0$ for all $\alpha\in
\lbrack1/2,1)\cup(1,\infty)$ \cite{DW15}. The quantum relative entropy
$D\left(  \omega\Vert\tau\right)  $ is defined for a density operator $\omega$
and a positive semi-definite operator $\tau$ as \cite{U62}%
\begin{equation}
D\left(  \omega\Vert\tau\right)  \equiv\operatorname{Tr}\left\{  \omega\left[
\log\omega-\log\tau\right]  \right\}  ,
\end{equation}
whenever $\operatorname{supp}\left(  \omega\right)  \subseteq
\operatorname{supp}\left(  \tau\right)  $, and by convention, it is defined to
be $+\infty$ otherwise. It is monotone with respect to quantum channels
\cite{Lindblad1975,U77}\ in the following sense:%
\begin{equation}
D\left(  \rho\Vert\sigma\right)  -D\left(  \mathcal{N}\left(  \rho\right)
\Vert\mathcal{N}\left(  \sigma\right)  \right)  \geq0.
\end{equation}
We refer to the quantity on the right-hand side of \eqref{eq:rel-ent-diff-a-1}
as a \textquotedblleft relative entropy difference.\textquotedblright

For $\alpha=1/2$, observe that%
\begin{align}
\widetilde{\Delta}_{1/2}\left(  \rho,\sigma,\mathcal{N}\right)   &
=-\log\left\Vert \left(  \left[  \mathcal{N}\left(  \rho\right)  \right]
^{1/2}\left[  \mathcal{N}\left(  \sigma\right)  \right]  ^{-1/2}\otimes
I_{E}\right)  U_{S\rightarrow BE}\sigma^{1/2}\rho^{1/2}\right\Vert _{1}^{2}\\
&  =-\log F\left(  \rho,\mathcal{R}_{\sigma,\mathcal{N}}^{P}\left(
\mathcal{N}\left(  \rho\right)  \right)  \right)  .
\end{align}
where $F\left(  \rho,\sigma\right)  \equiv\left\Vert \sqrt{\rho}\sqrt{\sigma
}\right\Vert _{1}^{2}$ is the quantum fidelity \cite{U73}\ and $\mathcal{R}%
_{\sigma,\mathcal{N}}^{P}$ is the Petz recovery map \cite{Petz1986,Petz1988}%
\ (see also \cite{BK02}) defined as%
\begin{equation}
\mathcal{R}_{\sigma,\mathcal{N}}^{P}\left(  \cdot\right)  \equiv\sigma
^{1/2}\mathcal{N}^{\dag}\left(  \left[  \mathcal{N}\left(  \sigma\right)
\right]  ^{-1/2}\left(  \cdot\right)  \left[  \mathcal{N}\left(
\sigma\right)  \right]  ^{-1/2}\right)  \sigma^{1/2}%
.\label{eq:Petz-channel-Rel-ent}%
\end{equation}
(See Appendix~\ref{app:Petz-channel}\ for a brief justification that
$\mathcal{R}_{\sigma,\mathcal{N}}^{P}$ is a completely positive
trace-non-increasing linear map and a quantum channel when acting on
$\operatorname{supp}\left(  \mathcal{N}\left(  \sigma\right)  \right)  $.)
From the definition, one can see that the fidelity possesses the following
properties:%
\begin{equation}
\sqrt{F}\left(  \omega_{XB},\tau_{XB}\right)  =\sum_{x}p_{X}\left(  x\right)
\sqrt{F}\left(  \omega_{x},\tau_{x}\right)  ,\label{eq:fid-flags}%
\end{equation}
where%
\begin{equation}
\omega_{XB}\equiv\sum_{x}p_{X}\left(  x\right)  \left\vert x\right\rangle
\left\langle x\right\vert _{X}\otimes\omega_{x},\ \ \ \ \tau_{XB}\equiv
\sum_{x}p_{X}\left(  x\right)  \left\vert x\right\rangle \left\langle
x\right\vert _{X}\otimes\tau_{x},\label{eq:flag-states}%
\end{equation}
$p_{X}$ is a probability distribution, $\left\{  \left\vert x\right\rangle
\right\}  $ is some orthonormal basis, and $\{\omega_{x}\}$ and $\{\tau_{x}\}$
are sets of states.

For the upper bounds given in this paper, the situation is a bit more
restrictive, and we take $\rho$, $\sigma$, and $\mathcal{N}$ as in the
following definition:

\begin{definition}
\label{def:rho-sig-N-2}Let $\rho_{SE^{\prime}}$ be a positive definite density
operator and let $\sigma_{SE^{\prime}}$ be a positive definite operator, each
acting on a finite-dimensional tensor-product Hilbert space $\mathcal{H}%
_{S}\otimes\mathcal{H}_{E^{\prime}}$. Let $\mathcal{N}$ be a quantum channel
given as follows:%
\begin{equation}
\mathcal{N}\left(  \theta_{SE^{\prime}}\right)  =\operatorname{Tr}_{E}\left\{
U_{SE^{\prime}\rightarrow BE}\theta_{SE^{\prime}}U_{SE^{\prime}\rightarrow
BE}^{\dag}\right\}  ,
\end{equation}
where $U_{SE^{\prime}\rightarrow BE}$ is a unitary operator taking
$\mathcal{H}_{S}\otimes\mathcal{H}_{E^{\prime}}$ to an isomorphic
finite-dimensional tensor-product Hilbert space $\mathcal{H}_{B}%
\otimes\mathcal{H}_{E}$, such that $\mathcal{N}\left(  \rho\right)  $ and
$\mathcal{N}\left(  \sigma\right)  $ are each positive definite and act
on~$\mathcal{H}_{B}$.
\end{definition}

Let $\rho$, $\sigma$, and $\mathcal{N}$ be as given in
Definition~\ref{def:rho-sig-N-2}. We require this restriction for the upper
bounds because in this case, we will be taking matrix inverses and need to
conclude statements such as the following one:%
\begin{equation}
\left[  \sigma^{-1/2}\mathcal{N}^{\dag}\left(  \left[  \mathcal{N}\left(
\sigma\right)  \right]  ^{1/2}\theta_{B}^{-1}\left[  \mathcal{N}\left(
\sigma\right)  \right]  ^{1/2}\right)  \sigma^{-1/2}\right]  ^{-1}%
=\sigma^{1/2}\mathcal{N}^{\dag}\left(  \left[  \mathcal{N}\left(
\sigma\right)  \right]  ^{-1/2}\theta_{B}\left[  \mathcal{N}\left(
\sigma\right)  \right]  ^{-1/2}\right)  \sigma^{1/2},
\end{equation}
where $\theta_{B}$ is positive definite. The equality above follows because in
this case%
\begin{equation}
\mathcal{N}^{\dag}\left(  \theta_{B}\right)  =U^{\dag}\left(  \theta
_{B}\otimes I_{E}\right)  U,
\end{equation}
with $U$ unitary, so that $\left[  \mathcal{N}^{\dag}\left(  \theta_{B}%
^{-1}\right)  \right]  ^{-1}=\mathcal{N}^{\dag}\left(  \theta_{B}\right)  $
(this equality need not hold if $\rho$, $\sigma$, and $\mathcal{N}$ are
allowed the more general form as in Definition~\ref{def:rho-sig-N}---i.e., a
matrix inverse does not commute with a partial trace). It then follows from
the method of proof given in \cite[Proposition 29]{BSW14} that the following
limit holds%
\begin{equation}
\lim_{\alpha\rightarrow\infty}\widetilde{\Delta}_{\alpha}\left(  \rho
,\sigma,\mathcal{N}\right)  =D_{\max}\left(  \rho\middle\Vert\mathcal{R}%
_{\sigma,\mathcal{N}}^{P}\left(  \mathcal{N}\left(  \rho\right)  \right)
\right)  ,\label{eq:D-max-rel-ent}%
\end{equation}
where%
\begin{equation}
D_{\max}\left(  \omega\Vert\tau\right)  \equiv\log\left\Vert \omega^{1/2}%
\tau^{-1}\omega^{1/2}\right\Vert _{\infty}=2\log\left\Vert \omega^{1/2}%
\tau^{-1/2}\right\Vert _{\infty}%
\end{equation}
is the max-relative entropy\ \cite{D09}. The quantity on the right-hand side
of \eqref{eq:D-max-rel-ent} was defined in \cite{DW15}, following directly
from the ideas presented in \cite{BSW14,SBW14}. From the definition, one can
see that the max-relative entropy possesses the following property:%
\begin{equation}
D_{\max}\left(  \omega_{XB}\Vert\tau_{XB}\right)  =\max_{x}D_{\max}\left(
\omega_{x}\Vert\tau_{x}\right)  ,\label{eq:max-flags}%
\end{equation}
where $\omega_{XB}$ and $\tau_{XB}$ are as in (\ref{eq:flag-states}).

We can now state the main theorem of this paper:

\begin{theorem}
\label{thm:rel-ent}Let $\rho$, $\sigma$, and $\mathcal{N}$ be as given in
Definition~\ref{def:rho-sig-N}. Then the following inequality holds%
\begin{equation}
-\log\left[  \sup_{t\in\mathbb{R}}F\left(  \rho,\mathcal{R}_{\sigma
,\mathcal{N}}^{P,t}\left(  \mathcal{N(}\rho)\right)  \right)  \right]  \leq
D\left(  \rho\Vert\sigma\right)  -D\left(  \mathcal{N(}\rho)\Vert
\mathcal{N(}\sigma)\right)  ,\label{eq:rel-ent-ineq}%
\end{equation}
where $\mathcal{R}_{\sigma,\mathcal{N}}^{P,t}$ is the following rotated Petz
recovery map:%
\begin{equation}
\mathcal{R}_{\sigma,\mathcal{N}}^{P,t}\left(  \cdot\right)  \equiv\left(
\mathcal{U}_{\sigma,t}\circ\mathcal{R}_{\sigma,\mathcal{N}}^{P}\circ
\mathcal{U}_{\mathcal{N}\left(  \sigma\right)  ,-t}\right)  \left(
\cdot\right)  ,\label{eq:rotated-Petz}%
\end{equation}
$\mathcal{R}_{\sigma,\mathcal{N}}^{P}$ is the Petz recovery map defined in
\eqref{eq:Petz-channel-Rel-ent}, and $\mathcal{U}_{\sigma,t}$ and
$\mathcal{U}_{\mathcal{N}\left(  \sigma\right)  ,-t}$ are partial isometric
maps defined from%
\begin{equation}
\mathcal{U}_{\omega,t}\left(  \cdot\right)  \equiv\omega^{it}\left(
\cdot\right)  \omega^{-it},\label{eq:unitaries}%
\end{equation}
with $\omega$ a positive semi-definite operator. If $\rho$, $\sigma$, and
$\mathcal{N}$ are as given in Definition~\ref{def:rho-sig-N-2}, then the
following inequality holds%
\begin{equation}
D\left(  \rho\Vert\sigma\right)  -D\left(  \mathcal{N(}\rho)\Vert
\mathcal{N(}\sigma)\right)  \leq\sup_{t\in\mathbb{R}}D_{\max}\left(
\rho\middle\Vert\mathcal{R}_{\sigma,\mathcal{N}}^{P,t}(\mathcal{N(}%
\rho))\right)  .\label{eq:UP-rel-ent}%
\end{equation}

\end{theorem}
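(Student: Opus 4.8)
The plan is to derive both inequalities from the operator Hadamard three-line theorem (Theorem~\ref{thm:hadamard}) applied to a single one-parameter holomorphic family $G$ of operators on the strip $S$, chosen so that its value at an interior point $z=\theta$ reproduces the operator appearing inside the R\'enyi quantity $\widetilde{\Delta}_{\alpha}(\rho,\sigma,\mathcal{N})$ of \eqref{eq:renyi-diff}, while its values on the two boundary lines $\operatorname{Re}\{z\}\in\{0,1\}$ reproduce, respectively, the fidelity and the max-relative entropy of $\rho$ with the rotated Petz map $\mathcal{R}^{P,t}_{\sigma,\mathcal{N}}(\mathcal{N}(\rho))$. Once a bound on $\widetilde{\Delta}_{\alpha}$ is obtained that is valid for all $\alpha$ on one side of $1$, the limit \eqref{eq:rel-ent-diff-a-1} upgrades it to the desired statement about the relative-entropy difference.

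For the lower bound \eqref{eq:rel-ent-ineq} I would fix $\alpha\in(1/2,1)$, take $p_{0}=1$, $p_{1}=2$ in Theorem~\ref{thm:hadamard} so that demanding $p_{\theta}=2\alpha$ forces $\theta=(2\alpha-1)/\alpha\in(0,1)$ with $1-\theta=(1-\alpha)/\alpha$, and define
\begin{equation}
G(z)\equiv\left(  [\mathcal{N}(\rho)]^{(1-z)/2}[\mathcal{N}(\sigma)]^{-(1-z)/2}\otimes I_{E}\right)  U_{S\to BE}\,\sigma^{(1-z)/2}\rho^{1/2},
\end{equation}
which, after embedding these rectangular operators in a common space, is a bounded entire $L(\mathcal{H})$-valued map, hence eligible for Theorem~\ref{thm:hadamard}. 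Three elementary computations then finish the case. First, $G(\theta)$ is exactly the operator whose $\left\Vert\cdot\right\Vert_{2\alpha}$ defines $\widetilde{\Delta}_{\alpha}$, so $\left\Vert G(\theta)\right\Vert_{p_{\theta}}=\exp\!\big(\tfrac{\alpha-1}{2\alpha}\widetilde{\Delta}_{\alpha}\big)$. Second, on $\operatorname{Re}\{z\}=0$ one writes $[\mathcal{N}(\rho)]^{(1-it)/2}=[\mathcal{N}(\rho)]^{-it/2}[\mathcal{N}(\rho)]^{1/2}$, pulls the leading (partial) isometry $[\mathcal{N}(\rho)]^{-it/2}\otimes I_{E}$ out of $\left\Vert\cdot\right\Vert_{1}$, and uses $F(\omega,QQ^{\dagger})=\left\Vert\sqrt{\omega}\,Q\right\Vert_{1}^{2}$ together with the explicit form \eqref{eq:Petz-channel-Rel-ent}--\eqref{eq:unitaries} of $\mathcal{R}^{P,t}_{\sigma,\mathcal{N}}$ to obtain $\left\Vert G(it)\right\Vert_{1}=\sqrt{F\big(\rho,\mathcal{R}^{P,t/2}_{\sigma,\mathcal{N}}(\mathcal{N}(\rho))\big)}$, hence $M_{0}=\big[\sup_{t}F(\rho,\mathcal{R}^{P,t}_{\sigma,\mathcal{N}}(\mathcal{N}(\rho)))\big]^{1/2}$. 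Third, on $\operatorname{Re}\{z\}=1$ the only non-isometric factor left is $\rho^{1/2}$, and unitary invariance of $\left\Vert\cdot\right\Vert_{2}$ together with the containment $\operatorname{supp}(U\sigma U^{\dagger})\subseteq\operatorname{supp}(\mathcal{N}(\sigma))\otimes\mathcal{H}_{E}$ give $\left\Vert G(1+it)\right\Vert_{2}\le\left\Vert\rho^{1/2}\right\Vert_{2}=1$, so $M_{1}\le1$. Feeding these into \eqref{eq:hadamard-3-line}, taking logarithms, and multiplying through by $2\alpha/(\alpha-1)<0$ collapses all exponents to yield $\widetilde{\Delta}_{\alpha}\ge-\log\sup_{t}F(\rho,\mathcal{R}^{P,t}_{\sigma,\mathcal{N}}(\mathcal{N}(\rho)))$ for every $\alpha\in(1/2,1)$; letting $\alpha\uparrow1$ and invoking \eqref{eq:rel-ent-diff-a-1} gives \eqref{eq:rel-ent-ineq}.

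For the upper bound \eqref{eq:UP-rel-ent} I would work under Definition~\ref{def:rho-sig-N-2} --- precisely so that all operators are invertible and $\mathcal{N}^{\dagger}$ is conjugation by a unitary, which is what makes $[\mathcal{R}^{P,t}_{\sigma,\mathcal{N}}(\mathcal{N}(\rho))]^{-1}$ a tractable expression --- fix $\alpha\in(1,\infty)$, take $p_{0}=2$, $p_{1}=\infty$ so that $p_{\theta}=2\alpha$ forces $\theta=(\alpha-1)/\alpha\in(0,1)$, and define
\begin{equation}
G(z)\equiv\left(  [\mathcal{N}(\rho)]^{-z/2}[\mathcal{N}(\sigma)]^{z/2}\otimes I_{E}\right)  U\,\sigma^{-z/2}\rho^{1/2}.
\end{equation}
Again $G(\theta)$ is the operator defining $\widetilde{\Delta}_{\alpha}$; on $\operatorname{Re}\{z\}=0$ every factor is unitary, so $M_{0}=\left\Vert\rho^{1/2}\right\Vert_{2}=1$; and on $\operatorname{Re}\{z\}=1$, after pulling the unitary $[\mathcal{N}(\rho)]^{-it/2}$ out of $\left\Vert\cdot\right\Vert_{\infty}$ and using $D_{\max}(\omega\Vert\tau)=2\log\left\Vert\omega^{1/2}\tau^{-1/2}\right\Vert_{\infty}$ with the inverse of $\mathcal{R}^{P,t/2}_{\sigma,\mathcal{N}}(\mathcal{N}(\rho))$, one gets $\left\Vert G(1+it)\right\Vert_{\infty}=\exp\!\big(\tfrac12 D_{\max}(\rho\Vert\mathcal{R}^{P,t/2}_{\sigma,\mathcal{N}}(\mathcal{N}(\rho)))\big)$, hence $M_{1}=\exp\!\big(\tfrac12\sup_{t}D_{\max}(\rho\Vert\mathcal{R}^{P,t}_{\sigma,\mathcal{N}}(\mathcal{N}(\rho)))\big)$. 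Since $\theta/2=(\alpha-1)/(2\alpha)$, inequality \eqref{eq:hadamard-3-line} becomes $\exp\!\big(\tfrac{\alpha-1}{2\alpha}\widetilde{\Delta}_{\alpha}\big)\le M_{0}^{1-\theta}M_{1}^{\theta}=\exp\!\big(\tfrac{\alpha-1}{2\alpha}\sup_{t}D_{\max}(\cdots)\big)$, so $\widetilde{\Delta}_{\alpha}\le\sup_{t}D_{\max}(\rho\Vert\mathcal{R}^{P,t}_{\sigma,\mathcal{N}}(\mathcal{N}(\rho)))$ for all $\alpha>1$; letting $\alpha\downarrow1$ and using \eqref{eq:rel-ent-diff-a-1} gives \eqref{eq:UP-rel-ent}.

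The interpolation step itself is cheap; the real work --- and the main obstacle --- is the linear-algebra bookkeeping on the boundary lines: (i) matching the phase factors $[\mathcal{N}(\sigma)]^{\pm it/2}$, $\sigma^{\pm it/2}$, $[\mathcal{N}(\rho)]^{\pm it/2}$ to the definition \eqref{eq:rotated-Petz}--\eqref{eq:unitaries} of the rotated Petz map (and observing that the harmless reparametrization $t\mapsto t/2$ does not affect $\sup_{t}$), and (ii) in the generality of Definition~\ref{def:rho-sig-N}, verifying that the surviving boundary factors really are isometries on the relevant supports so that unitary invariance of the Schatten norms can be applied --- this is where $\operatorname{supp}(\rho)\subseteq\operatorname{supp}(\sigma)$ and the containment $\operatorname{supp}(U\sigma U^{\dagger})\subseteq\operatorname{supp}(\mathcal{N}(\sigma))\otimes\mathcal{H}_{E}$ enter. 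One should also note that $G$ is trivially bounded on $S$ in finite dimensions, so that Theorem~\ref{thm:hadamard} applies as stated.
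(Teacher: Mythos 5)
Your proposal is correct and follows essentially the same interpolation argument as the paper: the same $G(z)$, boundary norms, and limit $\alpha\to1$ via \eqref{eq:rel-ent-diff-a-1}, with the upper bound handled identically and the lower bound differing only by the harmless reparametrization $z\mapsto 1-z$ of the strip (so $p_0=1$, $p_1=2$ instead of the paper's $p_0=2$, $p_1=1$). The parameter bookkeeping ($\theta=(2\alpha-1)/\alpha$, $p_\theta=2\alpha$) and the boundary identifications with the fidelity and $D_{\max}$ of the rotated Petz map check out, matching the paper's computations up to an immaterial rescaling of $t$.
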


\begin{proof}
We can prove this result by employing Theorem~\ref{thm:hadamard}. We first
establish the inequality in (\ref{eq:rel-ent-ineq}). Let $U_{S\rightarrow BE}$
be an isometric extension of the channel $\mathcal{N}$, which we abbreviate as
$U$ in what follows. Pick%
\begin{align}
G\left(  z\right)   &  \equiv\left(  \left[  \mathcal{N}\left(  \rho\right)
\right]  ^{z/2}\left[  \mathcal{N}\left(  \sigma\right)  \right]
^{-z/2}\otimes I_{E}\right)  U\sigma^{z/2}\rho^{1/2},\label{eq:G(z)}\\
p_{0} &  =2,\label{eq:p0}\\
p_{1} &  =1,\label{eq:p1}%
\end{align}
and $\theta\in\left(  0,1\right)  $, which fixes%
\begin{equation}
p_{\theta}=\frac{2}{1+\theta}.\label{eq:ptheta}%
\end{equation}
The operator valued-function $G\left(  z\right)  $ satisfies the conditions
needed to apply Theorem~\ref{thm:hadamard}.\footnote{Note that boundedness
follows from the finite-dimensional assumption---however the stronger bound
$\left\Vert G\left(  z\right)  \right\Vert _{\infty}\leq1$ holds for all $z\in
S$, where $S$ is defined in (\ref{eq:strip}) (this is a consequence of
(\ref{eq:leads-to-improved-bnd}), given that the quantum fidelity does not
exceed one).} For the choices in \eqref{eq:G(z)}-\eqref{eq:ptheta}, we find
that%
\begin{equation}
\left\Vert G\left(  \theta\right)  \right\Vert _{2/\left(  1+\theta\right)
}=\left\Vert \left(  \left[  \mathcal{N}\left(  \rho\right)  \right]
^{\theta/2}\left[  \mathcal{N}\left(  \sigma\right)  \right]  ^{-\theta
/2}\otimes I_{E}\right)  U\sigma^{\theta/2}\rho^{1/2}\right\Vert _{2/\left(
1+\theta\right)  },
\end{equation}%
\begin{align}
M_{0} &  =\sup_{t\in\mathbb{R}}\left\Vert G\left(  it\right)  \right\Vert
_{2}=\sup_{t\in\mathbb{R}}\left\Vert \left(  \left[  \mathcal{N}\left(
\rho\right)  \right]  ^{it/2}\left[  \mathcal{N}\left(  \sigma\right)
\right]  ^{-it/2}\otimes I_{E}\right)  U\sigma^{it}\rho^{1/2}\right\Vert
_{2}\nonumber\\
&  \leq\left\Vert \rho^{1/2}\right\Vert _{2}=1,\label{eq:M0}\\
M_{1} &  =\sup_{t\in\mathbb{R}}\left\Vert G\left(  1+it\right)  \right\Vert
_{1}\nonumber\\
&  =\sup_{t\in\mathbb{R}}\left\Vert \left(  \left[  \mathcal{N}\left(
\rho\right)  \right]  ^{\left(  1+it\right)  /2}\left[  \mathcal{N}\left(
\sigma\right)  \right]  ^{-\left(  1+it\right)  /2}\otimes I_{E}\right)
U\sigma^{\left(  1+it\right)  /2}\rho^{1/2}\right\Vert _{1}\nonumber\\
&  =\sup_{t\in\mathbb{R}}\left\Vert \left(  \left[  \mathcal{N}\left(
\rho\right)  \right]  ^{it/2}\left[  \mathcal{N}\left(  \rho\right)  \right]
^{1/2}\left[  \mathcal{N}\left(  \sigma\right)  \right]  ^{-it/2}\left[
\mathcal{N}\left(  \sigma\right)  \right]  ^{-1/2}\otimes I_{E}\right)
U\sigma^{1/2}\sigma^{it}\rho^{1/2}\right\Vert _{1}\nonumber\\
&  =\sup_{t\in\mathbb{R}}\left\Vert \left(  \left[  \mathcal{N}\left(
\rho\right)  \right]  ^{1/2}\left[  \mathcal{N}\left(  \sigma\right)  \right]
^{-it/2}\left[  \mathcal{N}\left(  \sigma\right)  \right]  ^{-1/2}\otimes
I_{E}\right)  U\sigma^{1/2}\sigma^{it}\rho^{1/2}\right\Vert _{1}\nonumber\\
&  =\sup_{t\in\mathbb{R}}\sqrt{F}\left(  \rho,\left(  \mathcal{U}%
_{\sigma,-t/2}\circ\mathcal{R}_{\sigma,\mathcal{N}}^{P}\circ\mathcal{U}%
_{\mathcal{N}\left(  \sigma\right)  ,t/2}\right)  \left(  \mathcal{N}\left(
\rho\right)  \right)  \right)  \nonumber\\
&  =\left[  \sup_{t\in\mathbb{R}}F\left(  \rho,\mathcal{R}_{\sigma
,\mathcal{N}}^{P,t}\left(  \mathcal{N}\left(  \rho\right)  \right)  \right)
\right]  ^{1/2}.\label{eq:M1}%
\end{align}
Then we can apply \eqref{eq:hadamard-3-line} to conclude that%
\begin{equation}
\left\Vert \left(  \left[  \mathcal{N(}\rho)\right]  ^{\theta/2}\left[
\mathcal{N(}\sigma)\right]  ^{-\theta/2}\otimes I_{E}\right)  U\sigma
^{\theta/2}\rho^{1/2}\right\Vert _{2/\left(  1+\theta\right)  }\leq\left[
\sup_{t\in\mathbb{R}}F\left(  \rho,\mathcal{R}_{\sigma,\mathcal{N}}%
^{P,t}\left(  \mathcal{N(}\rho)\right)  \right)  \right]  ^{\theta
/2}.\label{eq:leads-to-improved-bnd}%
\end{equation}
Taking a negative logarithm gives%
\begin{equation}
-\log\left[  \sup_{t\in\mathbb{R}}F\left(  \rho,\mathcal{R}_{\sigma
,\mathcal{N}}^{P,t}\left(  \mathcal{N(}\rho)\right)  \right)  \right]
\leq-\frac{2}{\theta}\log\left\Vert \left(  \left[  \mathcal{N(}\rho)\right]
^{\theta/2}\left[  \mathcal{N(}\sigma)\right]  ^{-\theta/2}\otimes
I_{E}\right)  U\sigma^{\theta/2}\rho^{1/2}\right\Vert _{2/\left(
1+\theta\right)  }.\label{eq:critical-inequality}%
\end{equation}
Letting $\theta=\left(  1-\alpha\right)  /\alpha$, we see that this is the
same as%
\begin{equation}
-\log\left[  \sup_{t\in\mathbb{R}}F\left(  \rho,\mathcal{R}_{\sigma
,\mathcal{N}}^{P,t}\left(  \mathcal{N}\left(  \rho\right)  \right)  \right)
\right]  \leq\widetilde{\Delta}_{\alpha}\left(  \rho,\sigma,\mathcal{N}%
\right)  .\label{eq:alpha-bound}%
\end{equation}
Since the inequality in \eqref{eq:critical-inequality} holds for all
$\theta\in\left(  0,1\right)  $ and thus \eqref{eq:alpha-bound} holds for all
$\alpha\in\left(  1/2,1\right)  $, we can take the limit as $\alpha\nearrow1$
and apply \eqref{eq:rel-ent-diff-a-1} to conclude that \eqref{eq:rel-ent-ineq} holds.

We now establish the inequality in (\ref{eq:UP-rel-ent}) for $\rho$, $\sigma$,
and $\mathcal{N}$ as given in Definition~\ref{def:rho-sig-N-2}. Note that in
this case, $U$ is a unitary. Pick%
\begin{align}
G\left(  z\right)   &  \equiv\left(  \left[  \mathcal{N}\left(  \rho\right)
\right]  ^{-z/2}\left[  \mathcal{N}\left(  \sigma\right)  \right]
^{z/2}\otimes I_{E}\right)  U\sigma^{-z/2}\rho^{1/2},\\
p_{0}  &  =2,\\
p_{1}  &  =\infty,
\end{align}
and $\theta\in\left(  0,1\right)  $, which fixes%
\begin{equation}
p_{\theta}=\frac{2}{1-\theta}.
\end{equation}
The operator valued-function $G\left(  z\right)  $ satisfies the conditions
needed to apply Theorem~\ref{thm:hadamard}. We then find that $M_{0}=1$ as
before, and%
\begin{equation}
\left\Vert G\left(  \theta\right)  \right\Vert _{2/\left(  1-\theta\right)
}=\left\Vert \left(  \left[  \mathcal{N}\left(  \rho\right)  \right]
^{-\theta/2}\left[  \mathcal{N}\left(  \sigma\right)  \right]  ^{\theta
/2}\otimes I_{E}\right)  U\sigma^{-\theta/2}\rho^{1/2}\right\Vert _{2/\left(
1-\theta\right)  },
\end{equation}%
\begin{align}
M_{1}  &  =\sup_{t\in\mathbb{R}}\left\Vert G\left(  1+it\right)  \right\Vert
_{\infty}\nonumber\\
&  =\sup_{t\in\mathbb{R}}\left\Vert \left(  \left[  \mathcal{N}\left(
\rho\right)  \right]  ^{-\left(  1+it\right)  /2}\left[  \mathcal{N}\left(
\sigma\right)  \right]  ^{\left(  1+it\right)  /2}\otimes I_{E}\right)
U\sigma^{-\left(  1+it\right)  /2}\rho^{1/2}\right\Vert _{\infty}\nonumber\\
&  =\sup_{t\in\mathbb{R}}\left\Vert \left(  \left[  \mathcal{N}\left(
\rho\right)  \right]  ^{-it/2}\left[  \mathcal{N}\left(  \rho\right)  \right]
^{-1/2}\left[  \mathcal{N}\left(  \sigma\right)  \right]  ^{it/2}\left[
\mathcal{N}\left(  \sigma\right)  \right]  ^{1/2}\otimes I_{E}\right)
U\sigma^{-1/2}\sigma^{-it/2}\rho^{1/2}\right\Vert _{\infty}\nonumber\\
&  =\sup_{t\in\mathbb{R}}\left\Vert \left(  \left[  \mathcal{N}\left(
\rho\right)  \right]  ^{-1/2}\left[  \mathcal{N}\left(  \sigma\right)
\right]  ^{it/2}\left[  \mathcal{N}\left(  \sigma\right)  \right]
^{1/2}\otimes I_{E}\right)  U\sigma^{-1/2}\sigma^{-it/2}\rho^{1/2}\right\Vert
_{\infty}\nonumber\\
&  =\left[  \exp\sup_{t\in\mathbb{R}}D_{\max}\left(  \rho\middle\Vert\left(
\mathcal{U}_{\sigma,-t}\circ\mathcal{R}_{\sigma,\mathcal{N}}^{P}%
\circ\mathcal{U}_{\mathcal{N}\left(  \sigma\right)  ,t}\right)  \left(
\mathcal{N}\left(  \rho\right)  \right)  \right)  \right]  ^{1/2}\nonumber\\
&  =\left[  \exp\sup_{t\in\mathbb{R}}D_{\max}\left(  \rho\middle\Vert
\mathcal{R}_{\sigma,\mathcal{N}}^{P,t}\left(  \mathcal{N}\left(  \rho\right)
\right)  \right)  \right]  ^{1/2}.
\end{align}
Then we can apply \eqref{eq:hadamard-3-line} to conclude that%
\begin{equation}
\left\Vert \left(  \left[  \mathcal{N(}\rho)\right]  ^{-\theta/2}\left[
\mathcal{N(}\sigma)\right]  ^{\theta/2}\otimes I_{E}\right)  U\sigma
^{-\theta/2}\rho^{1/2}\right\Vert _{2/\left(  1-\theta\right)  }\leq\left[
\exp\sup_{t\in\mathbb{R}}D_{\max}\left(  \rho\middle\Vert\mathcal{R}%
_{\sigma,\mathcal{N}}^{P,t}\left(  \mathcal{N}\left(  \rho\right)  \right)
\right)  \right]  ^{\theta/2}.
\end{equation}
Taking a logarithm gives%
\begin{equation}
\frac{2}{\theta}\log\left\Vert \left(  \left[  \mathcal{N(}\rho)\right]
^{-\theta/2}\left[  \mathcal{N(}\sigma)\right]  ^{\theta/2}\otimes
I_{E}\right)  U\sigma^{-\theta/2}\rho^{1/2}\right\Vert _{2/\left(
1-\theta\right)  }\leq\sup_{t\in\mathbb{R}}D_{\max}\left(  \rho\middle\Vert
\mathcal{R}_{\sigma,\mathcal{N}}^{P,t}\left(  \mathcal{N}\left(  \rho\right)
\right)  \right)  . \label{eq:critical-inequality-2}%
\end{equation}
Letting $\theta=\left(  \alpha-1\right)  /\alpha$, we see that this is the
same as%
\begin{equation}
\widetilde{\Delta}_{\alpha}\left(  \rho,\sigma,\mathcal{N}\right)  \leq
\sup_{t\in\mathbb{R}}D_{\max}\left(  \rho\middle\Vert\mathcal{R}%
_{\sigma,\mathcal{N}}^{P,t}\left(  \mathcal{N}\left(  \rho\right)  \right)
\right)  . \label{eq:alpha-bound-2}%
\end{equation}
Since the inequality in \eqref{eq:critical-inequality-2} holds for all
$\theta\in\left(  0,1\right)  $ and thus \eqref{eq:alpha-bound-2} holds for
all $\alpha\in\left(  1,\infty\right)  $, we can take the limit as
$\alpha\searrow1$ and apply \eqref{eq:rel-ent-diff-a-1} to conclude that
\eqref{eq:UP-rel-ent} holds.
\end{proof}

\begin{remark}
\label{rem:perfect-recovery}We cannot necessarily conclude which value of $t$
is optimal in Theorem~\ref{thm:rel-ent}. However, it is clear that the partial
isometric map $\mathcal{U}_{\omega,t}$ preserves the density operator $\omega$
(i.e., that the partial isometry $\omega^{it}$ is diagonal in the eigenbasis
of $\omega$). Furthermore, the optimal value of $t$ could have a dependence on
the state $\rho$, which is undesirable for some applications such as
approximate quantum error correction.
\end{remark}

\begin{remark}
\label{rem:perfect-rec-sig}Any recovery map of the form $\mathcal{R}%
_{\sigma,\mathcal{N}}^{P,t}$ perfectly recovers $\sigma$ from $\mathcal{N}%
\left(  \sigma\right)  $, in the sense that%
\begin{equation}
\left(  \mathcal{U}_{\sigma,t}\circ\mathcal{R}_{\sigma,\mathcal{N}}^{P}%
\circ\mathcal{U}_{\mathcal{N}\left(  \sigma\right)  ,-t}\right)  \left(
\mathcal{N}\left(  \sigma\right)  \right)  =\sigma,
\end{equation}
because%
\begin{equation}
\mathcal{U}_{\mathcal{N}\left(  \sigma\right)  ,-t}\left(  \mathcal{N}\left(
\sigma\right)  \right)  =\mathcal{N}\left(  \sigma\right)
,\ \ \ \ \mathcal{R}_{\sigma,\mathcal{N}}^{P}\left(  \mathcal{N}\left(
\sigma\right)  \right)  =\sigma,\ \ \ \ \mathcal{U}_{\sigma,t}\left(
\sigma\right)  =\sigma.
\end{equation}
This answers an open question discussed in \cite{BLW14}. In particular, we can
say that there is a map $\mathcal{R}_{\sigma,\mathcal{N}}^{P,t}$ that
perfectly recovers $\sigma$ from $\mathcal{N}\left(  \sigma\right)  $, while
having a performance limited by \eqref{eq:rel-ent-ineq} when recovering $\rho$
from $\mathcal{N}\left(  \rho\right)  $.
\end{remark}

\begin{remark}
From \eqref{eq:alpha-bound}\ in\ the proof given above, we can conclude that%
\begin{equation}
\widetilde{\Delta}_{\alpha}\left(  \rho,\sigma,\mathcal{N}\right)  \geq
-\log\left[  \sup_{t\in\mathbb{R}}F\left(  \rho,\mathcal{R}_{\sigma
,\mathcal{N}}^{P,t}\left(  \mathcal{N}\left(  \rho\right)  \right)  \right)
\right]  ,
\end{equation}
for all $\alpha\in\left(  1/2,1\right)  $. This inequality improves upon a
previous result from \cite{DW15}, which established that $\widetilde{\Delta
}_{\alpha}$ is non-negative for the same range of $\alpha$.
One also sees that $\widetilde{\Delta}_{\alpha}\left(  \rho,\sigma,\mathcal{N}\right) = 0$ implies that the channel $\mathcal{N}$ is sufficient for $\rho$ and
$\sigma$, in the sense that this condition implies the existence of a recovery map which perfectly recovers $\rho$ from $\mathcal{N}(\rho)$ and $\sigma$ from $\mathcal{N}(\sigma)$ .
\end{remark}

\section{Functoriality}

\label{sec:func}For a fixed $t$, recovery maps $\mathcal{R}_{\sigma
,\mathcal{N}}^{P,t}$ of the form in \eqref{eq:rotated-Petz} satisfy several
desirable \textquotedblleft functoriality\textquotedblright\ properties stated
in \cite{LW14}, in addition to the property stated in
Remark~\ref{rem:perfect-rec-sig}. These include normalization, parallel
composition, and serial composition, which we discuss in the following subsections.

\subsection{Normalization}

If there is in fact no noise, so that $\mathcal{N}=\operatorname{id}$, then we
would expect the recovery map to be equal to the identity channel as well.
This property is known as normalization \cite{LW14}, and we confirm it below
for all maps $\mathcal{R}_{\sigma,\mathcal{N}}^{P,t}$ of the form in
\eqref{eq:rotated-Petz} when $\mathcal{N}=\operatorname{id}$:%
\begin{align}
\mathcal{R}_{\sigma,\operatorname{id}}^{P,t}\left(  \cdot\right)   &  =\left(
\mathcal{U}_{\sigma,t}\circ\mathcal{R}_{\sigma,\operatorname{id}}^{P}%
\circ\mathcal{U}_{\operatorname{id}\left(  \sigma\right)  ,-t}\right)  \left(
\cdot\right)  \nonumber\\
&  =\sigma^{it}\sigma^{1/2}\operatorname{id}^{\dag}\left(  \left[
\operatorname{id}\left(  \sigma\right)  \right]  ^{-1/2}\left[
\operatorname{id}\left(  \sigma\right)  \right]  ^{-it}\left(  \cdot\right)
\left[  \operatorname{id}\left(  \sigma\right)  \right]  ^{it}\left[
\operatorname{id}\left(  \sigma\right)  \right]  ^{-1/2}\right)  \sigma
^{1/2}\sigma^{-it}\nonumber\\
&  =\sigma^{it}\sigma^{1/2}\left(  \sigma^{-1/2}\sigma^{-it}\left(
\cdot\right)  \sigma^{it}\sigma^{-1/2}\right)  \sigma^{1/2}\sigma
^{-it}\nonumber\\
&  =\Pi_{\sigma}\left(  \cdot\right)  \Pi_{\sigma}.
\end{align}
Thus, when $\sigma$ is positive definite, the recovery map is the identity channel.

\subsection{Parallel Composition}

If the $\sigma$ operator is a tensor product $\sigma_{1}\otimes\sigma_{2}$ and
the channel $\mathcal{N}$ is as well $\mathcal{N}_{1}\otimes\mathcal{N}_{2}$
(respecting the same tensor-product structure), then it would be desirable for
the recovery map to be a tensor product respecting this structure. This
property is known as parallel composition \cite{LW14}, and we confirm it below
for all maps $\mathcal{R}_{\sigma,\mathcal{N}}^{P,t}$ of the form in
\eqref{eq:rotated-Petz} when $\sigma=\sigma_{1}\otimes\sigma_{2}$ and
$\mathcal{N}=\mathcal{N}_{1}\otimes\mathcal{N}_{2}$. In fact, this property is
a consequence of the following:%
\begin{equation}
\mathcal{U}_{\sigma_{1}\otimes\sigma_{2},t}\left(  \cdot\right)  =\left[
\sigma_{1}\otimes\sigma_{2}\right]  ^{it}\left(  \cdot\right)  \left[
\sigma_{1}\otimes\sigma_{2}\right]  ^{-it}=\left[  \sigma_{1}^{it}%
\otimes\sigma_{2}^{it}\right]  \left(  \cdot\right)  \left[  \sigma_{1}%
^{-it}\otimes\sigma_{2}^{-it}\right]  =\left(  \mathcal{U}_{\sigma_{1}%
,t}\otimes\mathcal{U}_{\sigma_{2},t}\right)  \left(  \cdot\right)
,\label{eq:unitary-parallel}%
\end{equation}%
\begin{equation}
\mathcal{R}_{\sigma_{1}\otimes\sigma_{2},\mathcal{N}_{1}\otimes\mathcal{N}%
_{2}}^{P}\left(  \cdot\right)  =\left(  \mathcal{R}_{\sigma_{1},\mathcal{N}%
_{1}}^{P}\otimes\mathcal{R}_{\sigma_{2},\mathcal{N}_{2}}^{P}\right)  \left(
\cdot\right)  ,\label{eq:Petz-parallel}%
\end{equation}
where \eqref{eq:Petz-parallel} follows because%
\begin{align}
\left[  \sigma_{1}\otimes\sigma_{2}\right]  ^{1/2} &  =\sigma_{1}^{1/2}%
\otimes\sigma_{2}^{1/2},\ \ \ \ \ \left(  \mathcal{N}_{1}\otimes
\mathcal{N}_{2}\right)  ^{\dag}=\mathcal{N}_{1}^{\dag}\otimes\mathcal{N}%
_{2}^{\dag},\nonumber\\
\left(  \left(  \mathcal{N}_{1}\otimes\mathcal{N}_{2}\right)  \left(
\sigma_{1}\otimes\sigma_{2}\right)  \right)  ^{-1/2} &  =\left[
\mathcal{N}_{1}\left(  \sigma_{1}\right)  \right]  ^{-1/2}\otimes\left[
\mathcal{N}_{2}\left(  \sigma_{2}\right)  \right]  ^{-1/2}.
\end{align}
The following equality results from similar reasoning as in
\eqref{eq:unitary-parallel}:%
\begin{equation}
\mathcal{U}_{\left(  \mathcal{N}_{1}\otimes\mathcal{N}_{2}\right)  \left(
\sigma_{1}\otimes\sigma_{2}\right)  ,t}\left(  \cdot\right)  =\left(
\mathcal{U}_{\mathcal{N}_{1}\left(  \sigma_{1}\right)  ,t}\otimes
\mathcal{U}_{\mathcal{N}_{2}\left(  \sigma_{2}\right)  ,t}\right)  \left(
\cdot\right)  .
\end{equation}
Putting everything together, we find that parallel composition holds%
\begin{equation}
\mathcal{R}_{\sigma_{1}\otimes\sigma_{2},\mathcal{N}_{1}\otimes\mathcal{N}%
_{2}}^{P,t}\left(  \cdot\right)  =\left(  \mathcal{R}_{\sigma_{1}%
,\mathcal{N}_{1}}^{P,t}\otimes\mathcal{R}_{\sigma_{2},\mathcal{N}_{2}}%
^{P,t}\right)  \left(  \cdot\right)  .
\end{equation}

\subsection{Serial Composition}

If the channel $\mathcal{N}$ consists of the serial composition of two
channels $\mathcal{N}_{1}$ and $\mathcal{N}_{2}$, so that $\mathcal{N=N}%
_{2}\circ\mathcal{N}_{1}$, then it would be desirable for the recovery map to
consist of recovering from the last channel first and then from the first
channel. This property is known as serial composition \cite{LW14}, and we
confirm it below for all maps $\mathcal{R}_{\sigma,\mathcal{N}}^{P,t}$ of the
form in \eqref{eq:rotated-Petz} when $\mathcal{N}=\mathcal{N}_{2}%
\circ\mathcal{N}_{1}$. It is a consequence of the fact that%
\begin{equation}
\left(  \mathcal{N}_{2}\circ\mathcal{N}_{1}\right)  ^{\dag}=\mathcal{N}%
_{1}^{\dag}\circ\mathcal{N}_{2}^{\dag},
\end{equation}
so that%
\begin{align}
&  \mathcal{R}_{\sigma,\mathcal{N}_{2}\circ\mathcal{N}_{1}}^{P}\left(
\cdot\right)  \nonumber\\
&  =\sigma^{1/2}\left(  \mathcal{N}_{1}^{\dag}\circ\mathcal{N}_{2}^{\dag
}\right)  \left(  \left[  \left(  \mathcal{N}_{2}\circ\mathcal{N}_{1}\right)
\left(  \sigma\right)  \right]  ^{-1/2}\left(  \cdot\right)  \left[  \left(
\mathcal{N}_{2}\circ\mathcal{N}_{1}\right)  \left(  \sigma\right)  \right]
^{-1/2}\right)  \sigma^{1/2}\nonumber\\
&  =\sigma^{\frac{1}{2}}\mathcal{N}_{1}^{\dag}\left(  \mathcal{N}_{1}\left(
\sigma\right)  ^{-\frac{1}{2}}\left(  \mathcal{N}_{1}\left(  \sigma\right)
^{\frac{1}{2}}\left[  \mathcal{N}_{2}^{\dag}\left(  \left[  \mathcal{N}%
_{2}\left(  \mathcal{N}_{1}\left(  \sigma\right)  \right)  \right]
^{-\frac{1}{2}}\left(  \cdot\right)  \left[  \mathcal{N}_{2}\left(
\mathcal{N}_{1}\left(  \sigma\right)  \right)  \right]  ^{-\frac{1}{2}%
}\right)  \right]  \mathcal{N}_{1}\left(  \sigma\right)  ^{\frac{1}{2}%
}\right)  \mathcal{N}_{1}\left(  \sigma\right)  ^{-\frac{1}{2}}\right)
\sigma^{\frac{1}{2}}\nonumber\\
&  =\left(  \mathcal{R}_{\sigma,\mathcal{N}_{1}}^{P}\circ\mathcal{R}%
_{\mathcal{N}_{1}\left(  \sigma\right)  ,\mathcal{N}_{2}}^{P}\right)  \left(
\cdot\right)  .
\end{align}
Then the serial composition property follows because%
\begin{align}
\mathcal{R}_{\sigma,\mathcal{N}_{2}\circ\mathcal{N}_{1}}^{P,t}\left(
\cdot\right)   &  =\left(  \mathcal{U}_{\sigma,t}\circ\mathcal{R}%
_{\sigma,\mathcal{N}_{2}\circ\mathcal{N}_{1}}^{P}\circ\mathcal{U}_{\left(
\mathcal{N}_{2}\circ\mathcal{N}_{1}\right)  \left(  \sigma\right)
,-t}\right)  \left(  \cdot\right)  \nonumber\\
&  =\left(  \mathcal{U}_{\sigma,t}\circ\mathcal{R}_{\sigma,\mathcal{N}_{1}%
}^{P}\circ\mathcal{R}_{\mathcal{N}_{1}\left(  \sigma\right)  ,\mathcal{N}_{2}%
}^{P}\circ\mathcal{U}_{\left(  \mathcal{N}_{2}\circ\mathcal{N}_{1}\right)
\left(  \sigma\right)  ,-t}\right)  \left(  \cdot\right)  \nonumber\\
&  =\left(  \mathcal{U}_{\sigma,t}\circ\mathcal{R}_{\sigma,\mathcal{N}_{1}%
}^{P}\circ\mathcal{U}_{\mathcal{N}_{1}\left(  \sigma\right)  ,-t}%
\circ\mathcal{U}_{\mathcal{N}_{1}\left(  \sigma\right)  ,t}\mathcal{R}%
_{\mathcal{N}_{1}\left(  \sigma\right)  ,\mathcal{N}_{2}}^{P}\circ
\mathcal{U}_{\left(  \mathcal{N}_{2}\circ\mathcal{N}_{1}\right)  \left(
\sigma\right)  ,-t}\right)  \left(  \cdot\right)  \nonumber\\
&  =\left(  \mathcal{R}_{\sigma,\mathcal{N}_{1}}^{P,t}\circ\mathcal{R}%
_{\mathcal{N}_{1}\left(  \sigma\right)  ,\mathcal{N}_{2}}^{P,t}\right)
\left(  \cdot\right)  ,
\end{align}
where $\mathcal{R}_{\mathcal{N}_{1}\left(  \sigma\right)  ,\mathcal{N}_{2}%
}^{P,t}$ is the map that recovers from $\mathcal{N}_{2}$ and $\mathcal{R}%
_{\sigma,\mathcal{N}_{1}}^{P,t}$ is the map that recovers from~$\mathcal{N}%
_{1}$.

\section{Consequences and applications of Theorem~\ref{thm:rel-ent}}

\label{sec:corollaries}Theorem~\ref{thm:rel-ent} leads to a strengthening of
many entropy inequalities, including strong subadditivity of quantum entropy,
concavity of conditional entropy, joint convexity of relative entropy,
non-negativity of quantum discord, the Holevo bound, and multipartite
information inequalities. We list these as corollaries and give brief proofs
for them in the following subsections. Furthermore, there are potential
applications to approximate quantum error correction as well, which we
discuss. Some of the observations given below have been made before in
previous papers as either conjectures or concrete results
\cite{BSW14,SBW14,SW14,FR14,SOR15} (with many of them becoming concrete after
the posting of \cite{FR14}), but in many cases, Theorem~\ref{thm:rel-ent}%
\ allows us to make more precise statements due to the structure of the
recovery map $\mathcal{R}_{\sigma,\mathcal{N}}^{P,t}$ in
\eqref{eq:rotated-Petz} and its functoriality properties discussed in the
previous section.

\subsection{Strong Subadditivity}

The conditional quantum mutual information of a tripartite state $\rho_{ABC}$
is defined as%
\begin{equation}
I\left(  A;B|C\right)  _{\rho}\equiv H\left(  AC\right)  _{\rho}+H\left(
BC\right)  _{\rho}-H\left(  C\right)  _{\rho}-H\left(  ABC\right)  _{\rho},
\end{equation}
where $H(F)_{\sigma}\equiv-\operatorname{Tr}\{\sigma_{F}\log\sigma_{F}\}$ is
the von Neumann entropy of a density operator $\sigma_{F}$. Strong
subadditivity is the statement that $I\left(  A;B|C\right)  _{\rho}\geq0$ for
all tripartite states $\rho_{ABC}$ \cite{LR73,PhysRevLett.30.434}.

Corollary~\ref{thm:CMI} below gives an improvement of strong subadditivity, in
addition to providing an upper bound on conditional mutual information. It is
a direct consequence of Theorem~\ref{thm:rel-ent}\ after choosing%
\begin{equation}
\rho=\rho_{ABC},\ \ \ \ \sigma=\rho_{AC}\otimes I_{B},\ \ \ \ \mathcal{N}%
=\operatorname{Tr}_{A},\label{eq:cmi-rho}%
\end{equation}
so that%
\begin{align}
\mathcal{N}\left(  \rho\right)   &  =\rho_{BC},\ \ \ \ \ \ \mathcal{N}\left(
\sigma\right)  =\rho_{C}\otimes I_{B},\\
D\left(  \rho\Vert\sigma\right)  -D\left(  \mathcal{N}\left(  \rho\right)
\Vert\mathcal{N}\left(  \sigma\right)  \right)   &  =D\left(  \rho_{ABC}%
\Vert\rho_{AC}\otimes I_{B}\right)  -D\left(  \rho_{BC}\Vert\rho_{C}\otimes
I_{B}\right)  \nonumber\\
&  =I\left(  A;B|C\right)  _{\rho},\\
\mathcal{N}^{\dag}\left(  \cdot\right)   &  =\left(  \cdot\right)  \otimes
I_{A},\\
\mathcal{R}_{\sigma,\mathcal{N}}^{P}\left(  \cdot\right)   &  =\sigma
^{1/2}\mathcal{N}^{\dag}\left(  \left[  \mathcal{N}\left(  \sigma\right)
\right]  ^{-1/2}\left(  \cdot\right)  \left[  \mathcal{N}\left(
\sigma\right)  \right]  ^{-1/2}\right)  \sigma^{1/2}\nonumber\\
&  =\rho_{AC}^{1/2}\left[  \rho_{C}^{-1/2}\left(  \cdot\right)  \rho
_{C}^{-1/2}\otimes I_{A}\right]  \rho_{AC}^{1/2}\nonumber\\
&  \equiv\mathcal{R}_{C\rightarrow AC}^{P}\left(  \cdot\right)
,\label{eq:Petz-channel-CMI}%
\end{align}
where $\mathcal{R}_{C\rightarrow AC}^{P}$ is a special case of the Petz
recovery map \cite{Petz1986,Petz1988} (see also \cite{LW14}).

\begin{corollary}
\label{thm:CMI}Let $\rho_{ABC}$ be a density operator acting on a
finite-dimensional Hilbert space $\mathcal{H}_{A}\otimes\mathcal{H}_{B}%
\otimes\mathcal{H}_{C}$. Then the following inequality holds%
\begin{equation}
-\log\left[  \sup_{t\in\mathbb{R}}F\left(  \rho_{ABC},\mathcal{R}%
_{C\rightarrow AC}^{P,t}\left(  \rho_{BC}\right)  \right)  \right]  \leq
I\left(  A;B|C\right)  _{\rho},\label{eq:main-result}%
\end{equation}
where $\mathcal{R}_{C\rightarrow AC}^{P,t}$ is the following rotated Petz
recovery map:%
\begin{equation}
\mathcal{R}_{C\rightarrow AC}^{P,t}\left(  \cdot\right)  \equiv\left(
\mathcal{U}_{\rho_{AC},t}\circ\mathcal{R}_{C\rightarrow AC}^{P}\circ
\mathcal{U}_{\rho_{C},-t}\right)  \left(  \cdot\right)
,\label{eq:CMI-petz-recovery}%
\end{equation}
the Petz recovery map $\mathcal{R}_{C\rightarrow AC}^{P}$ is defined in
\eqref{eq:Petz-channel-CMI},\ and the partial isometric maps $\mathcal{U}%
_{\rho_{AC},t}$ and $\mathcal{U}_{\rho_{C},-t}$ are defined from
\eqref{eq:unitaries}. If $\rho_{ABC}$ is positive definite, then the following
inequality holds as well:%
\begin{equation}
I\left(  A;B|C\right)  _{\rho}\leq\sup_{t\in\mathbb{R}}D_{\max}\left(
\rho_{ABC}\middle\Vert\mathcal{R}_{C\rightarrow AC}^{P,t}\left(  \rho
_{BC}\right)  \right)  .
\end{equation}

\end{corollary}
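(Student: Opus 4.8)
The plan is to derive Corollary~\ref{thm:CMI} as a direct specialization of Theorem~\ref{thm:rel-ent} under the substitution $\rho = \rho_{ABC}$, $\sigma = \rho_{AC}\otimes I_B$, $\mathcal{N} = \operatorname{Tr}_A$ already set up in \eqref{eq:cmi-rho}. First I would check that this triple meets the hypotheses of Definition~\ref{def:rho-sig-N}: $\rho_{ABC}$ is a density operator on $\mathcal{H}_A\otimes\mathcal{H}_B\otimes\mathcal{H}_C$, the operator $\rho_{AC}\otimes I_B$ is positive semi-definite, and $\operatorname{supp}(\rho_{ABC})\subseteq\operatorname{supp}(\rho_{AC})\otimes\mathcal{H}_B=\operatorname{supp}(\rho_{AC}\otimes I_B)$ by the standard fact that the support of a multipartite state is contained in the support of any one of its reduced states tensored with the identity on the remaining systems; finally $\operatorname{Tr}_A$ is a quantum channel with finite-dimensional output $\mathcal{H}_B\otimes\mathcal{H}_C$.

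Next I would record the elementary identities displayed right before the statement: $\mathcal{N}(\rho)=\rho_{BC}$, $\mathcal{N}(\sigma)=\rho_C\otimes I_B$, and $\mathcal{N}^{\dag}(\cdot)=(\cdot)\otimes I_A$. Using $\log(\rho_{AC}\otimes I_B)=\log\rho_{AC}\otimes I_B$ and similarly for $\rho_C\otimes I_B$, a one-line computation gives $D(\rho\Vert\sigma)=H(AC)_\rho-H(ABC)_\rho$ and $D(\mathcal{N}(\rho)\Vert\mathcal{N}(\sigma))=H(C)_\rho-H(BC)_\rho$, so their difference is exactly $I(A;B|C)_\rho$. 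Substituting $\sigma^{1/2}=\rho_{AC}^{1/2}\otimes I_B$ and $[\mathcal{N}(\sigma)]^{-1/2}=\rho_C^{-1/2}\otimes I_B$ into \eqref{eq:Petz-channel-Rel-ent} collapses the Petz map to $\mathcal{R}_{C\rightarrow AC}^{P}$ of \eqref{eq:Petz-channel-CMI}, and since $(\rho_{AC}\otimes I_B)^{it}=\rho_{AC}^{it}\otimes I_B$ and $(\rho_C\otimes I_B)^{it}=\rho_C^{it}\otimes I_B$, the rotated map $\mathcal{R}_{\sigma,\mathcal{N}}^{P,t}$ of \eqref{eq:rotated-Petz} becomes precisely $\mathcal{R}_{C\rightarrow AC}^{P,t}$ of \eqref{eq:CMI-petz-recovery}. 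Feeding all of this into \eqref{eq:rel-ent-ineq} yields \eqref{eq:main-result}.

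For the upper bound I would assume $\rho_{ABC}$ positive definite and verify that the same triple fits Definition~\ref{def:rho-sig-N-2}: take the extra environment-type system $E'$ to be trivial, realize $\operatorname{Tr}_A$ via the canonical (unitary) reshuffling $\mathcal{H}_A\otimes\mathcal{H}_B\otimes\mathcal{H}_C\to(\mathcal{H}_B\otimes\mathcal{H}_C)\otimes\mathcal{H}_A$ with output system $BC$ and environment $A$, and note that positive definiteness of $\rho_{ABC}$ forces $\rho_{AC}$, $\rho_C$, $\rho_{BC}$ positive definite, hence $\sigma=\rho_{AC}\otimes I_B$ and $\mathcal{N}(\sigma)=\rho_C\otimes I_B$ positive definite and $\mathcal{N}(\rho)=\rho_{BC}$ positive definite. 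Then \eqref{eq:UP-rel-ent} applies with the same identifications, giving the stated $D_{\max}$ bound. Everything here is mechanical substitution; the only two points needing actual care are (i) the support inclusion $\operatorname{supp}(\rho_{ABC})\subseteq\operatorname{supp}(\rho_{AC}\otimes I_B)$, which is what lets Definition~\ref{def:rho-sig-N} be invoked, and (ii) casting $\operatorname{Tr}_A$ in the unitary-dilation form of Definition~\ref{def:rho-sig-N-2}, which is what makes the upper-bound half of Theorem~\ref{thm:rel-ent} available—neither is a genuine obstacle.
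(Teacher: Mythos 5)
Your proposal is correct and follows essentially the same route as the paper: Corollary~\ref{thm:CMI} is obtained exactly by specializing Theorem~\ref{thm:rel-ent} to $\rho=\rho_{ABC}$, $\sigma=\rho_{AC}\otimes I_{B}$, $\mathcal{N}=\operatorname{Tr}_{A}$, with the relative entropy difference collapsing to $I(A;B|C)_{\rho}$ and the (rotated) Petz map collapsing to $\mathcal{R}_{C\rightarrow AC}^{P,t}$. Your additional checks of the support condition for Definition~\ref{def:rho-sig-N} and of the unitary-dilation/positive-definiteness requirements of Definition~\ref{def:rho-sig-N-2} are correct details that the paper leaves implicit.
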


\begin{remark}
A lower bound on $I\left(  A;B|C\right)  _{\rho}$ similar to that in
Corollary~\ref{thm:CMI} has already been identified \cite{FRSS15}\ by putting
together various statements from \cite{FR14,SOR15}. One needs to examine the
discussion surrounding Eqs.~(120)-(122) in \cite{SOR15} and make several
further arguments\ in order to arrive at this conclusion. See Remark~2.5 of
\cite{SOR15}.
\end{remark}

\begin{remark}
We note that $\widetilde{\Delta}_{\alpha}\left(  \rho,\sigma,\mathcal{N}%
\right)  $ for the choices in \eqref{eq:cmi-rho} reduces to the R\'{e}nyi
conditional mutual information from \cite{BSW14}:%
\begin{equation}
\widetilde{I}_{\alpha}\left(  A;B|C\right)  _{\rho}\equiv\frac{2\alpha}%
{\alpha-1}\log\left\Vert \rho_{BC}^{\left(  1-\alpha\right)  /2\alpha}\rho
_{C}^{\left(  \alpha-1\right)  /2\alpha}\rho_{AC}^{\left(  1-\alpha\right)
/2\alpha}\rho_{ABC}^{1/2}\right\Vert _{2\alpha},\label{eq:Renyi-CMI}%
\end{equation}
as observed in \cite{SBW14}. Thus, from the inequality in
\eqref{eq:alpha-bound}, we can conclude that%
\begin{equation}
\widetilde{I}_{\alpha}\left(  A;B|C\right)  _{\rho}\geq-\log\left[  \sup
_{t\in\mathbb{R}}F\left(  \rho_{ABC},\mathcal{R}_{C\rightarrow AC}%
^{P,t}\left(  \rho_{BC}\right)  \right)  \right]  ,
\end{equation}
for all $\alpha\in\left(  1/2,1\right)  $. This inequality improves upon a
previous result from \cite{BSW14}, which established that $\widetilde
{I}_{\alpha}$ is non-negative for the same range of $\alpha$.
\end{remark}

\begin{remark}
A statement similar to the first two sentences of
Remark~\ref{rem:perfect-recovery} applies as well to the partial isometric
maps in Corollary~\ref{thm:CMI}. Furthermore, the parameter $t$ has a
dependence on the global state $\rho_{ABC}$, so that the recovery map given
above does not possess the universality property discussed in \cite{SOR15}.
Regardless, the structure of the unitaries is sufficient for us to conclude
that any recovery map of the form $\mathcal{U}_{\rho_{AC},t}\circ
\mathcal{R}_{C\rightarrow AC}^{P}\circ\mathcal{U}_{\rho_{C},-t}$ always
perfectly recovers the state $\rho_{AC}$ from~$\rho_{C}$:%
\begin{equation}
\left(  \mathcal{U}_{\rho_{AC},t}\circ\mathcal{R}_{C\rightarrow AC}^{P}%
\circ\mathcal{U}_{\rho_{C},-t}\right)  \left(  \rho_{C}\right)  =\rho_{AC},
\end{equation}
because%
\begin{equation}
\mathcal{U}_{\rho_{C},-t}\left(  \rho_{C}\right)  =\rho_{C}%
,\ \ \ \ \mathcal{R}_{C\rightarrow AC}^{P}\left(  \rho_{C}\right)  =\rho
_{AC},\ \ \ \ \mathcal{U}_{\rho_{AC},t}\left(  \rho_{AC}\right)  =\rho_{AC}.
\end{equation}

\end{remark}

\subsection{Concavity of conditional quantum entropy}

The ideas in this section follow a line of thought developed in \cite{BLW14}.
Let $\mathcal{E}\equiv\left\{  p_{X}\left(  x\right)  ,\rho_{AB}^{x}\right\}
$ be an ensemble of bipartite quantum states with expectation%
\begin{equation}
\overline{\rho}_{AB}\equiv\sum_{x}p_{X}\left(  x\right)  \rho_{AB}^{x}.
\end{equation}
Concavity of conditional entropy is the statement that%
\begin{equation}
H\left(  A|B\right)  _{\overline{\rho}}\geq\sum_{x}p_{X}\left(  x\right)
H\left(  A|B\right)  _{\rho^{x}},
\end{equation}
where the conditional quantum entropy $H\left(  A|B\right)  _{\sigma}$\ is
defined for a state $\sigma_{AB}$ as%
\begin{equation}
H\left(  A|B\right)  _{\sigma}\equiv H\left(  AB\right)  _{\sigma}-H\left(
B\right)  _{\sigma}=-D\left(  \sigma_{AB}\Vert I_{A}\otimes\sigma_{B}\right)
.
\end{equation}

Let $\omega_{XAB}$ denote the following classical-quantum state in which we
have encoded the ensemble$~\mathcal{E}$:%
\begin{equation}
\omega_{XAB}\equiv\sum_{x}p_{X}\left(  x\right)  \left\vert x\right\rangle
\left\langle x\right\vert _{X}\otimes\rho_{AB}^{x}.
\end{equation}
We can rewrite%
\begin{align}
H\left(  A|B\right)  _{\overline{\rho}}-\sum_{x}p_{X}\left(  x\right)
H\left(  A|B\right)  _{\rho^{x}} &  =H\left(  A|B\right)  _{\omega}-H\left(
A|BX\right)  _{\omega}\\
&  =I\left(  A;X|B\right)  _{\omega}\\
&  =H\left(  X|B\right)  _{\omega}-H\left(  X|AB\right)  _{\omega}\\
&  =D\left(  \omega_{XAB}\Vert I_{X}\otimes\omega_{AB}\right)  -D\left(
\omega_{XB}\Vert I_{X}\otimes\omega_{B}\right)  .
\end{align}
We can see the last line above as a relative entropy difference (as defined in
the right-hand side of \eqref{eq:rel-ent-diff-a-1}) by picking%
\begin{equation}
\rho=\omega_{XAB},\ \ \ \ \sigma=I_{X}\otimes\omega_{AB},\ \ \ \ \mathcal{N}%
=\operatorname{Tr}_{A}.
\end{equation}
Applying Theorem~\ref{thm:rel-ent}, \eqref{eq:fid-flags}, and
\eqref{eq:max-flags}, we find the following improvement of concavity of
conditional entropy:

\begin{corollary}
Let an ensemble $\mathcal{E}$\ be as given above. Then the following
inequality holds%
\begin{equation}
-2\log\sup_{t\in\mathbb{R}}\sum_{x}p_{X}\left(  x\right)  \sqrt{F}\left(
\rho_{AB}^{x},\mathcal{R}_{\overline{\rho}_{AB},\operatorname{Tr}_{A}}%
^{P,t}\left(  \rho_{B}^{x}\right)  \right)  \leq H\left(  A|B\right)
_{\overline{\rho}}-\sum_{x}p_{X}\left(  x\right)  H\left(  A|B\right)
_{\rho^{x}},
\end{equation}
where the recovery map $\mathcal{R}_{\overline{\rho}_{AB},\operatorname{Tr}%
_{A}}^{P,t}$ is defined from \eqref{eq:rotated-Petz} and perfectly recovers
$\overline{\rho}_{AB}$ from $\overline{\rho}_{B}$. If the states in the
ensemble are positive definite, then the following inequality holds%
\begin{equation}
H\left(  A|B\right)  _{\overline{\rho}}-\sum_{x}p_{X}\left(  x\right)
H\left(  A|B\right)  _{\rho^{x}}\leq\sup_{t\in\mathbb{R}}\max_{x}D_{\max
}\left(  \rho_{AB}^{x}\middle\Vert\mathcal{R}_{\overline{\rho}_{AB}%
,\operatorname{Tr}_{A}}^{P,t}\left(  \rho_{B}^{x}\right)  \right)  .
\end{equation}

\end{corollary}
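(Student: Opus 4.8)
The plan is to apply Theorem~\ref{thm:rel-ent} directly with the substitution $\rho=\omega_{XAB}$, $\sigma=I_X\otimes\overline{\rho}_{AB}$, $\mathcal{N}=\operatorname{Tr}_A$ identified just above the corollary, and then to rewrite the fidelity and max-relative entropy appearing on the right-hand sides of \eqref{eq:rel-ent-ineq} and \eqref{eq:UP-rel-ent} in ``flag'' form via \eqref{eq:fid-flags} and \eqref{eq:max-flags}. The relative-entropy difference $D(\rho\Vert\sigma)-D(\mathcal{N}(\rho)\Vert\mathcal{N}(\sigma))$ has already been shown to equal $H(A|B)_{\overline{\rho}}-\sum_x p_X(x)H(A|B)_{\rho^x}$, so the only substantive step is to identify the recovery map $\mathcal{R}^{P,t}_{\sigma,\mathcal{N}}$ and its action on $\mathcal{N}(\rho)=\omega_{XB}$.

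The first thing I would verify is that the $X$ register is a spectator in every factor entering \eqref{eq:Petz-channel-Rel-ent}, \eqref{eq:rotated-Petz}, and \eqref{eq:unitaries}: here $\mathcal{N}(\sigma)=I_X\otimes\overline{\rho}_B$, $\mathcal{N}^\dagger(\cdot)=(\cdot)\otimes I_A$, $\sigma^{it}=I_X\otimes\overline{\rho}_{AB}^{it}$, and $[\mathcal{N}(\sigma)]^{it}=I_X\otimes\overline{\rho}_B^{it}$, so $\mathcal{R}^{P,t}_{\sigma,\mathcal{N}}=\operatorname{id}_X\otimes\mathcal{R}^{P,t}_{\overline{\rho}_{AB},\operatorname{Tr}_A}$, the latter map acting from $B$ to $AB$. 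Applying this to $\mathcal{N}(\rho)=\omega_{XB}=\sum_x p_X(x)\,|x\rangle\langle x|_X\otimes\rho_B^x$ yields $\sum_x p_X(x)\,|x\rangle\langle x|_X\otimes\mathcal{R}^{P,t}_{\overline{\rho}_{AB},\operatorname{Tr}_A}(\rho_B^x)$, which together with $\rho=\omega_{XAB}$ is a pair of states of exactly the form \eqref{eq:flag-states} with a common flag basis and distribution. Then \eqref{eq:fid-flags} gives $\sqrt{F}(\rho,\mathcal{R}^{P,t}_{\sigma,\mathcal{N}}(\mathcal{N}(\rho)))=\sum_x p_X(x)\sqrt{F}(\rho_{AB}^x,\mathcal{R}^{P,t}_{\overline{\rho}_{AB},\operatorname{Tr}_A}(\rho_B^x))$; squaring, taking the supremum over $t$ (legitimate since the inner quantity is nonnegative), and applying $-\log$ converts \eqref{eq:rel-ent-ineq} into the first claimed inequality, the factor of $2$ arising from the square. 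The assertion that $\mathcal{R}^{P,t}_{\overline{\rho}_{AB},\operatorname{Tr}_A}$ perfectly recovers $\overline{\rho}_{AB}$ from $\overline{\rho}_B$ is then immediate from Remark~\ref{rem:perfect-rec-sig} with $\sigma=\overline{\rho}_{AB}$ and channel $\operatorname{Tr}_A$.

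For the second inequality I would first check the hypotheses of Definition~\ref{def:rho-sig-N-2}: taking $\mathcal{H}_X$ to be the span of $\{|x\rangle\}$ over the support of $p_X$, positive definiteness of each $\rho_{AB}^x$ makes $\omega_{XAB}$ (block-diagonal with blocks $p_X(x)\rho_{AB}^x$), $\sigma=I_X\otimes\overline{\rho}_{AB}$, $\mathcal{N}(\rho)=\omega_{XB}$, and $\mathcal{N}(\sigma)=I_X\otimes\overline{\rho}_B$ all positive definite, since partial traces and positively weighted sums of positive definite operators are again positive definite; and $\operatorname{Tr}_A$ has the Stinespring form required there, with environment $E=A$, trivial $E'$, and $U$ a unitary reordering of tensor factors $XAB\to XBA$. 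Then \eqref{eq:UP-rel-ent} applies, and \eqref{eq:max-flags} evaluates its right-hand side as $\sup_t\max_x D_{\max}(\rho_{AB}^x\Vert\mathcal{R}^{P,t}_{\overline{\rho}_{AB},\operatorname{Tr}_A}(\rho_B^x))$, giving the second claim.

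The main obstacle, and really the only point that is not routine bookkeeping, is the tensor-factorization $\mathcal{R}^{P,t}_{\sigma,\mathcal{N}}=\operatorname{id}_X\otimes\mathcal{R}^{P,t}_{\overline{\rho}_{AB},\operatorname{Tr}_A}$, together with the resulting observation that $\mathcal{R}^{P,t}_{\sigma,\mathcal{N}}$ maps the classical--quantum state $\omega_{XB}$ to another state sharing the same $X$-structure. Once that is in place, the flag identities \eqref{eq:fid-flags} and \eqref{eq:max-flags} finish both bounds with no further estimation.
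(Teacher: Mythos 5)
Your proposal is correct and follows the paper's own route: the paper likewise applies Theorem~\ref{thm:rel-ent} with $\rho=\omega_{XAB}$, $\sigma=I_{X}\otimes\overline{\rho}_{AB}$, $\mathcal{N}=\operatorname{Tr}_{A}$ and then invokes \eqref{eq:fid-flags} and \eqref{eq:max-flags}. The details you add (the factorization $\mathcal{R}^{P,t}_{\sigma,\mathcal{N}}=\operatorname{id}_{X}\otimes\mathcal{R}^{P,t}_{\overline{\rho}_{AB},\operatorname{Tr}_{A}}$, the factor of $2$ from squaring the fidelity sum, and the check of Definition~\ref{def:rho-sig-N-2} for the upper bound) are exactly the bookkeeping the paper leaves implicit, and they are all accurate.
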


\subsection{Joint convexity of quantum relative entropy}

The ideas in this section follow a line of thought developed in \cite{SBW14}.
Let $\left\{  p_{X}\left(  x\right)  ,\rho_{x}\right\}  $ be an ensemble of
density operators and $\left\{  p_{X}\left(  x\right)  ,\sigma_{x}\right\}  $
be an ensemble of positive semi-definite operators with expectations%
\begin{equation}
\overline{\rho}\equiv\sum_{x}p_{X}\left(  x\right)  \rho_{x}%
,\ \ \ \ \ \ \ \ \overline{\sigma}\equiv\sum_{x}p_{X}\left(  x\right)
\sigma_{x}.
\end{equation}
Joint convexity of quantum relative entropy is the statement that
distinguishability of these ensembles does not increase under the loss of the
classical label:%
\begin{equation}
\sum_{x}p_{X}\left(  x\right)  D\left(  \rho_{x}\Vert\sigma_{x}\right)  \geq
D\left(  \overline{\rho}\Vert\overline{\sigma}\right)  .
\end{equation}
By picking%
\begin{equation}
\rho=\rho_{XB}\equiv\sum_{x}p_{X}\left(  x\right)  \left\vert x\right\rangle
\left\langle x\right\vert _{X}\otimes\rho_{x},\ \ \ \ \sigma=\sigma_{XB}%
\equiv\sum_{x}p_{X}\left(  x\right)  \left\vert x\right\rangle \left\langle
x\right\vert _{X}\otimes\sigma_{x},\ \ \ \ \mathcal{N}=\operatorname{Tr}_{X},
\end{equation}
and applying Theorem~\ref{thm:rel-ent}, we arrive at the following improvement
of joint convexity of quantum relative entropy:

\begin{corollary}
\label{cor:JC-rel-ent} Let ensembles be as given above. Then the following
inequalities hold%
\begin{equation}
-\log\sup_{t\in\mathbb{R}}F\left(  \rho_{XB},\mathcal{R}_{\sigma
_{XB},\operatorname{Tr}_{X}}^{P,t}\left(  \overline{\rho}\right)  \right)
\leq\sum_{x}p_{X}\left(  x\right)  D\left(  \rho_{x}\Vert\sigma_{x}\right)
-D\left(  \overline{\rho}\Vert\overline{\sigma}\right)  ,
\end{equation}
where the recovery map $\mathcal{R}_{\sigma_{XB},\operatorname{Tr}_{X}}^{P,t}$
is defined from \eqref{eq:rotated-Petz} and perfectly recovers $\sigma_{XB}$
from $\sigma_{B}$. If the operators are positive definite, then%
\begin{equation}
\sum_{x}p_{X}\left(  x\right)  D\left(  \rho_{x}\Vert\sigma_{x}\right)
-D\left(  \overline{\rho}\Vert\overline{\sigma}\right)  \leq\sup
_{t\in\mathbb{R}}D_{\max}\left(  \rho_{XB}\middle\Vert\mathcal{R}_{\sigma
_{XB},\operatorname{Tr}_{X}}^{P,t}\left(  \overline{\rho}\right)  \right)  .
\end{equation}

\end{corollary}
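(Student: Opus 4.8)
The plan is to obtain Corollary~\ref{cor:JC-rel-ent} as a direct specialization of Theorem~\ref{thm:rel-ent} under the substitution $\rho=\rho_{XB}$, $\sigma=\sigma_{XB}$, $\mathcal{N}=\operatorname{Tr}_X$ indicated before the statement. For the lower bound I would first dispose of the trivial case: if $\operatorname{supp}(\rho_x)\not\subseteq\operatorname{supp}(\sigma_x)$ for some $x$ with $p_X(x)>0$, then $D(\rho_x\Vert\sigma_x)=+\infty$ and the inequality is vacuous; otherwise $\operatorname{supp}(\rho_{XB})\subseteq\operatorname{supp}(\sigma_{XB})$, so $\rho_{XB}$ and $\sigma_{XB}$ meet the hypotheses of Definition~\ref{def:rho-sig-N} and Theorem~\ref{thm:rel-ent} applies.

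The key computational step is to identify the relative-entropy difference for this choice with the joint-convexity gap. Since $\rho_{XB}$ and $\sigma_{XB}$ are classical--quantum with the \emph{same} classical distribution $p_X$, one has $\log\rho_{XB}=\sum_x|x\rangle\langle x|_X\otimes(\log p_X(x)\,\Pi_{\rho_x}+\log\rho_x)$ on its support, and likewise for $\sigma_{XB}$, so the $\log p_X(x)$ terms cancel in $\log\rho_{XB}-\log\sigma_{XB}$ and
\begin{equation}
D(\rho_{XB}\Vert\sigma_{XB})=\sum_x p_X(x)\,D(\rho_x\Vert\sigma_x).
\end{equation}
Moreover $\mathcal{N}(\rho_{XB})=\operatorname{Tr}_X(\rho_{XB})=\overline{\rho}$ and $\mathcal{N}(\sigma_{XB})=\overline{\sigma}$, hence
\begin{equation}
D(\rho_{XB}\Vert\sigma_{XB})-D(\mathcal{N}(\rho_{XB})\Vert\mathcal{N}(\sigma_{XB}))=\sum_x p_X(x)\,D(\rho_x\Vert\sigma_x)-D(\overline{\rho}\Vert\overline{\sigma}),
\end{equation}
which is exactly the quantity to be bounded.

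Substituting $\mathcal{N}(\rho_{XB})=\overline{\rho}$ into \eqref{eq:rel-ent-ineq} then yields the first inequality, and Remark~\ref{rem:perfect-rec-sig} (with $\mathcal{N}(\sigma_{XB})=\sigma_B=\overline{\sigma}$) gives the asserted perfect recovery of $\sigma_{XB}$. For the upper bound I would check that $\operatorname{Tr}_X$ fits Definition~\ref{def:rho-sig-N-2}: take $\mathcal{H}_{E'}$ one-dimensional, so $\mathcal{H}_S=\mathcal{H}_X\otimes\mathcal{H}_B$, and let $U_{S\to BE}$ be the unitary reordering $\mathcal{H}_X\otimes\mathcal{H}_B\to\mathcal{H}_B\otimes\mathcal{H}_X$ with $\mathcal{H}_E=\mathcal{H}_X$, so that $\operatorname{Tr}_E\{U(\cdot)U^{\dag}\}=\operatorname{Tr}_X(\cdot)$. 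If every $\rho_x$ and $\sigma_x$ is positive definite (and $p_X(x)>0$ for all $x$, which we may assume after discarding null labels), then the block-diagonal operators $\rho_{XB}$ and $\sigma_{XB}$ are positive definite, and so are $\overline{\rho}$ and $\overline{\sigma}$ (a sum of positive operators one of which is positive definite). Inequality \eqref{eq:UP-rel-ent} then gives the second inequality, again using $\mathcal{N}(\rho_{XB})=\overline{\rho}$.

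The only real obstacle is bookkeeping rather than any genuine difficulty: one must be careful that the partial-trace channel $\operatorname{Tr}_X$ genuinely admits a \emph{unitary} dilation of the restricted form demanded by Definition~\ref{def:rho-sig-N-2} (it does, the environment being simply a copy of $X$), and that positive-definiteness of the individual ensemble members is inherited by $\rho_{XB}$, $\sigma_{XB}$ and by their $B$-marginals. Everything else is the routine identification of the classical--quantum relative-entropy difference with the joint-convexity gap and a direct appeal to Theorem~\ref{thm:rel-ent}; note that the ``flag'' identities \eqref{eq:fid-flags} and \eqref{eq:max-flags} are not needed here, since the corollary is phrased directly in terms of $F(\rho_{XB},\cdot)$ and $D_{\max}(\rho_{XB}\Vert\cdot)$.
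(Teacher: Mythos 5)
Your proposal is correct and follows the same route as the paper, which simply makes the choices $\rho=\rho_{XB}$, $\sigma=\sigma_{XB}$, $\mathcal{N}=\operatorname{Tr}_X$ and invokes Theorem~\ref{thm:rel-ent}; your identification $D(\rho_{XB}\Vert\sigma_{XB})=\sum_x p_X(x)D(\rho_x\Vert\sigma_x)$ and the verification that $\operatorname{Tr}_X$ admits the unitary dilation required by Definition~\ref{def:rho-sig-N-2} (with block positive definiteness inherited from the ensemble members) are exactly the details the paper leaves implicit. You are also right that \eqref{eq:fid-flags} and \eqref{eq:max-flags} are not needed for this corollary.
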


\begin{remark}
The recovery map $\mathcal{R}_{\sigma_{XB},\operatorname{Tr}_{X}}^{P,t}$ from
Corollary~\ref{cor:JC-rel-ent} can be understood as a rotated
\textquotedblleft pretty good\textquotedblright\ measurement
\cite{B75,B75a,PhysRevA.54.1869}. By using the facts that%
\begin{equation}
\rho_{XB}=\bigoplus\limits_{x}p_{X}\left(  x\right)  \rho_{x},\ \ \ \ \sigma
_{XB}=\bigoplus\limits_{x}p_{X}\left(  x\right)  \sigma_{x},
\end{equation}
we have for all $z\in\mathbb{C}$ that%
\begin{equation}
\rho_{XB}^{z}=\bigoplus\limits_{x}\left[  p_{X}\left(  x\right)  \rho
_{x}\right]  ^{z},\ \ \ \ \sigma_{XB}^{z}=\bigoplus\limits_{x}\left[
p_{X}\left(  x\right)  \sigma_{x}\right]  ^{z}.
\end{equation}
This allows us to write the recovery map as the following trace-non-increasing
instrument (a trace-non-increasing map with a classical and quantum output):%
\begin{equation}
\mathcal{R}_{\sigma_{XB},\operatorname{Tr}_{X}}^{P,t}\left(  \cdot\right)
=\sum_{x}\left\vert x\right\rangle \left\langle x\right\vert _{X}\otimes
p_{X}\left(  x\right)  \left[  p_{X}\left(  x\right)  \sigma_{x}\right]
^{it}\sigma_{x}^{1/2}\left(  \overline{\sigma}\right)  ^{-\frac{1}{2}%
}(\overline{\sigma})^{-it}\left(  \cdot\right)  (\overline{\sigma}%
)^{it}\left(  \overline{\sigma}\right)  ^{-\frac{1}{2}}\sigma_{x}^{1/2}\left[
p_{X}\left(  x\right)  \sigma_{x}\right]  ^{-it}.
\end{equation}
Tracing over the quantum system then gives the following rotated pretty good
measurement:%
\begin{equation}
\left(  \cdot\right)  \rightarrow\sum_{x}\operatorname{Tr}\left\{  \left(
\overline{\sigma}\right)  ^{-1/2}p_{X}\left(  x\right)  \sigma_{x}\left(
\overline{\sigma}\right)  ^{-1/2}\overline{\sigma}^{-it}\left(  \cdot\right)
\overline{\sigma}^{it}\right\}  \left\vert x\right\rangle \left\langle
x\right\vert _{X},
\end{equation}
which should be compared with the measurement map corresponding to the pretty
good measurement:%
\begin{equation}
\left(  \cdot\right)  \rightarrow\sum_{x}\operatorname{Tr}\left\{  \left(
\overline{\sigma}\right)  ^{-1/2}p_{X}\left(  x\right)  \sigma_{x}\left(
\overline{\sigma}\right)  ^{-1/2}\left(  \cdot\right)  \right\}  \left\vert
x\right\rangle \left\langle x\right\vert _{X}.
\end{equation}

\end{remark}

\subsection{Non-negativity of quantum discord}

The ideas in this section follow a line of thought developed in
\cite{SBW14,SW14}. Let $\rho_{AB}$ be a bipartite density operator and let
$\left\{  \left\vert \varphi_{x}\right\rangle \left\langle \varphi
_{x}\right\vert _{A}\right\}  $ be a rank-one quantum measurement on system
$A$ (i.e., the vectors $\left\vert \varphi_{x}\right\rangle _{A}$ satisfy
$\sum_{x}\left\vert \varphi_{x}\right\rangle \left\langle \varphi
_{x}\right\vert _{A}=I_{A}$). It suffices for us to consider rank-one
measurements for our discussion here because every quantum measurement can be
refined to have a rank-one form, such that it delivers more classical
information to the experimentalist observing the apparatus. Then the
(unoptimized) quantum discord is defined to be the difference between the
following mutual informations \cite{Z00,zurek}:%
\begin{equation}
I\left(  A;B\right)  _{\rho}-I\left(  X;B\right)  _{\omega},
\end{equation}
where%
\begin{align}
I\left(  A;B\right)  _{\rho} &  \equiv D\left(  \rho_{AB}\Vert\rho_{A}%
\otimes\rho_{B}\right)  ,\\
\mathcal{M}_{A\rightarrow X}\left(  \cdot\right)   &  \equiv\sum
_{x}\left\langle \varphi_{x}\right\vert _{A}\left(  \cdot\right)  \left\vert
\varphi_{x}\right\rangle _{A}\left\vert x\right\rangle \left\langle
x\right\vert _{X},\label{eq:meas-discord}\\
\omega_{XB} &  \equiv\mathcal{M}_{A\rightarrow X}\left(  \rho_{AB}\right)
.\label{eq:post-meas-discord}%
\end{align}
The quantum channel $\mathcal{M}_{A\rightarrow X}$ is a measurement channel,
so that the state $\omega_{XB}$ is the classical-quantum state resulting from
the measurement. The set $\left\{  \left\vert x\right\rangle _{X}\right\}  $
is an orthonormal basis so that $X$ is a classical system. The quantum discord
is known to be non-negative, and by applying Theorem~\ref{thm:rel-ent}\ we
find the following improvement of this entropy inequality:

\begin{corollary}
\label{cor:discord}Let $\rho_{AB}$ and $\mathcal{M}_{A\rightarrow X}$ be as
given above. Then the following inequalities hold%
\begin{equation}
-\log\sup_{t\in\mathbb{R}}F\left(  \rho_{AB},\left(  \mathcal{U}_{\rho_{A}%
,t}\circ\mathcal{E}_{A}\right)  \left(  \rho_{AB}\right)  \right)  \leq
I\left(  A;B\right)  _{\rho}-I\left(  X;B\right)  _{\omega}%
,\label{eq:discord-bound}%
\end{equation}
where $\mathcal{E}_{A}$ is an entanglement-breaking map of the following form:%
\begin{equation}
\left(  \cdot\right)  _{A}\rightarrow\sum_{x}\left\langle \varphi
_{x}\right\vert _{A}\left(  \cdot\right)  \left\vert \varphi_{x}\right\rangle
_{A}\frac{\rho_{A}^{1/2}\left\vert \varphi_{x}\right\rangle \left\langle
\varphi_{x}\right\vert _{A}\rho_{A}^{1/2}}{\left\langle \varphi_{x}\right\vert
_{A}\rho_{A}\left\vert \varphi_{x}\right\rangle _{A}},\label{eq:EB-channel}%
\end{equation}
and the partial isometric map $\mathcal{U}_{\rho_{A},t}$ is defined from
\eqref{eq:unitaries}. The recovery map $\mathcal{U}_{\rho_{A},t}%
\circ\mathcal{E}_{A}$ perfectly recovers $\rho_{A}$ from $\mathcal{M}%
_{A\rightarrow X}(\rho_{A})$.
\end{corollary}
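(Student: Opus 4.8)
The plan is to obtain Corollary~\ref{cor:discord} as a direct instance of the first inequality in Theorem~\ref{thm:rel-ent} under the substitutions
\begin{equation}
\rho = \rho_{AB}, \qquad \sigma = \rho_A \otimes \rho_B, \qquad \mathcal{N} = \mathcal{M}_{A\rightarrow X} \otimes \operatorname{id}_B.
\end{equation}
Since $\operatorname{supp}(\rho_{AB}) \subseteq \operatorname{supp}(\rho_A)\otimes\operatorname{supp}(\rho_B) = \operatorname{supp}(\rho_A\otimes\rho_B)$, Definition~\ref{def:rho-sig-N} is satisfied. First I would record the elementary facts $\mathcal{N}(\rho) = \omega_{XB}$ (the state in \eqref{eq:post-meas-discord}), $\mathcal{N}(\sigma) = \omega_X \otimes \rho_B$ with $\omega_X \equiv \mathcal{M}_{A\rightarrow X}(\rho_A)$, and $\omega_B = \rho_B$ (using $\sum_x |\varphi_x\rangle\langle\varphi_x|_A = I_A$). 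It follows immediately that the relative entropy difference equals the quantum discord:
\begin{equation}
D(\rho\Vert\sigma) - D(\mathcal{N}(\rho)\Vert\mathcal{N}(\sigma)) = D(\rho_{AB}\Vert\rho_A\otimes\rho_B) - D(\omega_{XB}\Vert\omega_X\otimes\omega_B) = I(A;B)_\rho - I(X;B)_\omega.
\end{equation}

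The computational core is to show that the rotated Petz recovery map for these choices is precisely $\mathcal{U}_{\rho_A,t}\circ\mathcal{E}_A$, with $\mathcal{E}_A$ as in \eqref{eq:EB-channel}. The adjoint of the measurement channel is the preparation map $\mathcal{M}_{A\rightarrow X}^{\dag}(|x\rangle\langle x|_X) = |\varphi_x\rangle\langle\varphi_x|_A$, so $\mathcal{N}^{\dag} = \mathcal{M}_{A\rightarrow X}^{\dag}\otimes\operatorname{id}_B$. Substituting $[\mathcal{N}(\sigma)]^{-1/2} = \omega_X^{-1/2}\otimes\rho_B^{-1/2}$ and $\sigma^{1/2} = \rho_A^{1/2}\otimes\rho_B^{1/2}$ into \eqref{eq:Petz-channel-Rel-ent}, I would check that, acting on the classical-quantum state $\omega_{XB} = \sum_x |x\rangle\langle x|_X \otimes \langle\varphi_x|_A\rho_{AB}|\varphi_x\rangle_A$, the $\omega_X^{-1/2}$ factors cancel the outcome weights $p(x) = \langle\varphi_x|_A\rho_A|\varphi_x\rangle_A$, yielding $\mathcal{R}_{\sigma,\mathcal{N}}^{P}(\omega_{XB}) = \sum_x \rho_A^{1/2}|\varphi_x\rangle\langle\varphi_x|_A\rho_A^{1/2}\otimes\rho_B^x$, which, after reordering tensor factors, is exactly $(\mathcal{E}_A\otimes\operatorname{id}_B)(\rho_{AB})$ with $\rho_B^x \equiv \langle\varphi_x|_A\rho_{AB}|\varphi_x\rangle_A / p(x)$. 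For the rotated map I would invoke the parallel property \eqref{eq:unitary-parallel} to split $\mathcal{U}_{\sigma,t} = \mathcal{U}_{\rho_A,t}\otimes\mathcal{U}_{\rho_B,t}$ and $\mathcal{U}_{\mathcal{N}(\sigma),-t} = \mathcal{U}_{\omega_X,-t}\otimes\mathcal{U}_{\rho_B,-t}$, observe that $\mathcal{U}_{\omega_X,-t}$ has no effect on a state block-diagonal in $\{|x\rangle\langle x|_X\}$, and note that the $\rho_B^{\pm it}$ conjugations produced by $\mathcal{U}_{\mathcal{N}(\sigma),-t}$, the $\rho_B^{\pm 1/2}$ sandwiching inside $\mathcal{R}_{\sigma,\mathcal{N}}^{P}$, and the $\rho_B^{it}$ conjugation from $\mathcal{U}_{\rho_B,t}$ all telescope to the identity on $B$. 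The only surviving rotation is $\mathcal{U}_{\rho_A,t}$ on the $A$ output, so $\mathcal{R}_{\sigma,\mathcal{N}}^{P,t}(\omega_{XB}) = (\mathcal{U}_{\rho_A,t}\circ\mathcal{E}_A)(\rho_{AB})$.

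With this identity in hand, the inequality \eqref{eq:discord-bound} is a direct transcription of \eqref{eq:rel-ent-ineq}. Finally I would dispatch the perfect-recovery claim: since $\sum_x|\varphi_x\rangle\langle\varphi_x|_A = I_A$, one has $\mathcal{E}_A(\rho_A) = \rho_A^{1/2}(\sum_x|\varphi_x\rangle\langle\varphi_x|_A)\rho_A^{1/2} = \rho_A$, and $\mathcal{U}_{\rho_A,t}$ leaves $\rho_A$ invariant, so $\mathcal{U}_{\rho_A,t}\circ\mathcal{E}_A$ recovers $\rho_A$ from $\mathcal{M}_{A\rightarrow X}(\rho_A)$ (this also follows from Remark~\ref{rem:perfect-rec-sig}). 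The main obstacle is purely organizational: tracking how the block-diagonal classical register $X$ and the spectator system $B$ pass through the three maps composing $\mathcal{R}_{\sigma,\mathcal{N}}^{P,t}$, so that the measure-and-prepare structure \eqref{eq:EB-channel} emerges and every spurious $\rho_B$-dependence cancels. I would also remark that only the fidelity lower bound appears (rather than a $D_{\max}$ upper bound via Definition~\ref{def:rho-sig-N-2}), because $\mathcal{N}(\sigma) = \omega_X\otimes\rho_B$ need not be positive definite and the measurement channel does not admit the unitary dilation required there.
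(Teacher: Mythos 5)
Your proposal is correct and follows essentially the same route as the paper: rewrite the discord as a relative entropy difference, apply the first inequality of Theorem~\ref{thm:rel-ent} with a product reference operator and the measurement channel, and then compute that the rotated Petz map composed with $\mathcal{M}_{A\rightarrow X}$ collapses to $\mathcal{U}_{\rho_{A},t}\circ\mathcal{E}_{A}$ because the $\omega_X$ phases act trivially on block-diagonal outputs. The only (inessential) deviation is your choice $\sigma=\rho_{A}\otimes\rho_{B}$ where the paper takes $\sigma=\rho_{A}\otimes I_{B}$, which spares it the extra—though entirely valid—verification that the commuting powers of $\rho_{B}$ cancel on system $B$.
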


\begin{proof}
We start with the rewriting%
\begin{equation}
I\left(  A;B\right)  _{\rho}-I\left(  X;B\right)  _{\omega}=D\left(  \rho
_{AB}\Vert\rho_{A}\otimes I_{B}\right)  -D\left(  \omega_{AB}\Vert\omega
_{A}\otimes I_{B}\right)  ,
\end{equation}
and follow by picking%
\begin{equation}
\rho=\rho_{AB},\ \ \ \ \sigma=\rho_{A}\otimes I_{B},\ \ \ \ \mathcal{N}%
=\mathcal{M}_{A\rightarrow X},
\end{equation}
and applying Theorem~\ref{thm:rel-ent}. This then shows the corollary with a
recovery map of the form $\mathcal{R}_{\rho_{A},\mathcal{M}_{A\rightarrow X}%
}^{P,t}\circ\mathcal{M}_{A\rightarrow X}$.

As observed in \cite{SBW14,SW14}, the concatenation $\mathcal{R}_{\rho
_{A},\mathcal{M}_{A\rightarrow X}}^{P,t}\circ\mathcal{M}_{A\rightarrow X}$ is
an entanglement-breaking channel \cite{HSR03} because it consists of a
measurement channel $\mathcal{M}_{A\rightarrow X}$ followed by a preparation.
We now work out the form for the recovery map given in
\eqref{eq:discord-bound}. Consider that%
\begin{equation}
\mathcal{M}_{A\rightarrow X}\left(  \rho_{A}\right)  =\sum_{x}\left\langle
\varphi_{x}\right\vert _{A}\rho_{A}\left\vert \varphi_{x}\right\rangle
_{A}\left\vert x\right\rangle \left\langle x\right\vert _{X},
\end{equation}
so that%
\begin{equation}
\mathcal{U}_{\mathcal{M}_{A\rightarrow X}\left(  \rho_{A}\right)  ,-t}\left(
\cdot\right)  =\left[  \sum_{x}\left[  \left\langle \varphi_{x}\right\vert
_{A}\rho_{A}\left\vert \varphi_{x}\right\rangle _{A}\right]  ^{-it}\left\vert
x\right\rangle \left\langle x\right\vert _{X}\right]  \left(  \cdot\right)
\left[  \sum_{x^{\prime}}\left[  \left\langle \varphi_{x^{\prime}}\right\vert
_{A}\rho_{A}\left\vert \varphi_{x^{\prime}}\right\rangle _{A}\right]
^{it}\left\vert x^{\prime}\right\rangle \left\langle x^{\prime}\right\vert
_{X}\right]  .
\end{equation}
Thus, when composing $\mathcal{M}_{A\rightarrow X}$ with $\mathcal{U}%
_{\mathcal{M}_{A\rightarrow X}\left(  \rho_{A}\right)  ,-t}$, the phases
cancel out to give the following relation:%
\begin{equation}
\mathcal{U}_{\mathcal{M}_{A\rightarrow X}\left(  \rho_{A}\right)  ,-t}\left(
\mathcal{M}_{A\rightarrow X}\left(  \cdot\right)  \right)  =\mathcal{M}%
_{A\rightarrow X}\left(  \cdot\right)  .
\end{equation}
One can then work out that%
\begin{align}
\mathcal{R}_{\rho_{A},\mathcal{M}_{A\rightarrow X}}^{P}\left(  \mathcal{M}%
_{A\rightarrow X}\left(  \cdot\right)  \right)   &  =\rho_{A}^{1/2}%
\mathcal{M}^{\dag}\left(  \left[  \mathcal{M}_{A\rightarrow X}\left(  \rho
_{A}\right)  \right]  ^{-1/2}\mathcal{M}_{A\rightarrow X}\left(  \cdot\right)
\left[  \mathcal{M}_{A\rightarrow X}\left(  \rho_{A}\right)  \right]
^{-1/2}\right)  \rho_{A}^{1/2}\\
&  =\sum_{x}\left\langle \varphi_{x}\right\vert _{A}\left(  \cdot\right)
\left\vert \varphi_{x}\right\rangle _{A}\frac{\rho_{A}^{1/2}\left\vert
\varphi_{x}\right\rangle \left\langle \varphi_{x}\right\vert _{A}\rho
_{A}^{1/2}}{\left\langle \varphi_{x}\right\vert _{A}\rho_{A}\left\vert
\varphi_{x}\right\rangle _{A}}.
\end{align}

\end{proof}

\subsection{Holevo bound}

The ideas in this section follow a line of thought developed in \cite{SBW14}.
The Holevo bound \cite{Ho73}\ is a special case of the non-negativity of
quantum discord in which $\rho_{AB}$ is a quantum-classical state, which we
write explicitly as%
\begin{equation}
\rho_{AB}=\sum_{y}p_{Y}\left(  y\right)  \rho_{A}^{y}\otimes\left\vert
y\right\rangle \left\langle y\right\vert _{Y},\label{eq:cq-holevo}%
\end{equation}
where each $\rho_{A}^{y}$ is a density operator, so that $\rho_{A}=\sum
_{y}p_{Y}\left(  y\right)  \rho_{A}^{y}$. The Holevo bound states that the
mutual information of the state $\rho_{AB}$ in \eqref{eq:cq-holevo} is never
smaller than the mutual information after system $A$ is measured. By applying
Corollary~\ref{cor:discord} and \eqref{eq:fid-flags}, we find the following improvement:

\begin{corollary}
Let $\rho_{AB}$ be as in \eqref{eq:cq-holevo},\ and let $\mathcal{M}%
_{A\rightarrow X}$ and $\omega_{AB}$ be as in
\eqref{eq:meas-discord}-\eqref{eq:post-meas-discord}, respectively. Then the
following inequality holds%
\begin{equation}
-2\log\sup_{t\in\mathbb{R}}\sum_{y}p_{Y}\left(  y\right)  \sqrt{F}\left(
\rho_{A}^{y},\left(  \mathcal{U}_{\rho_{A},t}\circ\mathcal{E}_{A}\right)
\left(  \rho_{A}^{y}\right)  \right)  \leq I\left(  A;B\right)  _{\rho
}-I\left(  X;B\right)  _{\omega},
\end{equation}
where $\mathcal{E}_{A}$ is an entanglement-breaking map of the form in
\eqref{eq:EB-channel} and the partial isometric map $\mathcal{U}_{\rho_{A},t}$
is defined from \eqref{eq:unitaries}.
\end{corollary}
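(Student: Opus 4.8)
The plan is to derive this as an immediate specialization of Corollary~\ref{cor:discord}, using the classical flag carried by system $B$ (written as $Y$ in \eqref{eq:cq-holevo}) together with the multiplicativity of the root fidelity across flags recorded in \eqref{eq:fid-flags}. Since the Holevo bound is exactly the discord inequality applied to a quantum-classical $\rho_{AB}$, the right-hand side $I(A;B)_{\rho}-I(X;B)_{\omega}$ is already the quantity we want, so the whole task is to rewrite the left-hand side.

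First I would apply Corollary~\ref{cor:discord} verbatim to the state $\rho_{AB}$ of the form \eqref{eq:cq-holevo} together with the rank-one measurement channel $\mathcal{M}_{A\rightarrow X}$ of \eqref{eq:meas-discord}, obtaining
\[
-\log\sup_{t\in\mathbb{R}}F\left(\rho_{AB},\left(\mathcal{U}_{\rho_{A},t}\circ\mathcal{E}_{A}\right)\left(\rho_{AB}\right)\right)\leq I\left(A;B\right)_{\rho}-I\left(X;B\right)_{\omega},
\]
where $\mathcal{E}_{A}$ is the entanglement-breaking map in \eqref{eq:EB-channel}.

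Next I would note that $\mathcal{E}_{A}$ and $\mathcal{U}_{\rho_{A},t}$ act only on system $A$, so applying $\mathcal{U}_{\rho_{A},t}\circ\mathcal{E}_{A}$ to $\rho_{AB}=\sum_{y}p_{Y}(y)\,\rho_{A}^{y}\otimes|y\rangle\langle y|_{Y}$ leaves the classical register $Y$ untouched and produces $\sum_{y}p_{Y}(y)\left[\left(\mathcal{U}_{\rho_{A},t}\circ\mathcal{E}_{A}\right)\left(\rho_{A}^{y}\right)\right]\otimes|y\rangle\langle y|_{Y}$, a flag state with the same classical flags $\{p_{Y}(y),|y\rangle\langle y|_{Y}\}$ as $\rho_{AB}$. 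Invoking \eqref{eq:fid-flags} with the systems there relabeled as $X\to Y$ and $B\to A$ gives
\[
\sqrt{F}\left(\rho_{AB},\left(\mathcal{U}_{\rho_{A},t}\circ\mathcal{E}_{A}\right)\left(\rho_{AB}\right)\right)=\sum_{y}p_{Y}(y)\,\sqrt{F}\left(\rho_{A}^{y},\left(\mathcal{U}_{\rho_{A},t}\circ\mathcal{E}_{A}\right)\left(\rho_{A}^{y}\right)\right).
\]
Because $F=(\sqrt{F})^{2}$ and $x\mapsto x^{2}$ is increasing on $[0,\infty)$, the supremum over $t$ commutes with the squaring, so $-\log\sup_{t}F$ equals $-2\log\sup_{t}\sum_{y}p_{Y}(y)\sqrt{F}$, which is precisely the left-hand side of the claimed inequality; chaining this with the bound from Corollary~\ref{cor:discord} finishes the proof.

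There is no genuinely hard step here. The only points that warrant care are the bookkeeping that $\mathcal{E}_{A}$ and $\mathcal{U}_{\rho_{A},t}$ are maps on $A$ alone, so that the block-diagonal structure on $Y$ is preserved exactly and \eqref{eq:fid-flags} applies with equality rather than merely an inequality, and the elementary observation that squaring can be pulled through the supremum over $t$, which is what turns $-\log$ into $-2\log$ and the fidelity into the root fidelity.
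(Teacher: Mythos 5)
Your proposal is correct and follows exactly the paper's route: the paper obtains this corollary precisely by applying Corollary~\ref{cor:discord} to the quantum-classical state \eqref{eq:cq-holevo} and then using the flag decomposition of the root fidelity in \eqref{eq:fid-flags}, with the squaring/supremum bookkeeping you spell out. No gaps.
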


\subsection{Differences of quantum multipartite informations}

The quantum multipartite information of a multipartite state $\omega
_{B_{1}\cdots B_{l}}$ is defined as \cite{W60,H94}:%
\begin{equation}
I\left(  B_{1}:\cdots:B_{l}\right)  _{\omega}\equiv\sum_{i=1}^{l}H\left(
B_{i}\right)  _{\omega}-H\left(  B_{1}\cdots B_{l}\right)  _{\omega}=D\left(
\omega_{B_{1}\cdots B_{l}}\Vert\omega_{B_{1}}\otimes\cdots\otimes\omega
_{B_{l}}\right)  .
\end{equation}
One can then use this to define a difference of multipartite informations for
a state $\rho_{A_{1}A_{1}^{\prime}\cdots A_{l}A_{l}^{\prime}}$\ as%
\begin{equation}
I\left(  A_{1}A_{1}^{\prime}:\cdots:A_{l}A_{l}^{\prime}\right)  _{\rho
}-I\left(  A_{1}^{\prime}:\cdots:A_{l}^{\prime}\right)  _{\rho},
\end{equation}
using which multipartite entanglement \cite{YHW08}\ and discord-like
\cite{PHH08} measures can be constructed. By picking%
\begin{equation}
\rho=\rho_{A_{1}A_{1}^{\prime}\cdots A_{l}A_{l}^{\prime}},\ \ \ \ \sigma
=\rho_{A_{1}A_{1}^{\prime}}\otimes\cdots\otimes\rho_{A_{l}A_{l}^{\prime}%
},\ \ \ \ \mathcal{N}=\operatorname{Tr}_{A_{1}\cdots A_{l}},
\end{equation}
and applying Theorem~\ref{thm:rel-ent}, we establish the following corollary,
which solves in the affirmative the open question posed in Eq.~(7.8) of
\cite{W14}:

\begin{corollary}
Let $\rho_{A_{1}A_{1}^{\prime}\cdots A_{l}A_{l}^{\prime}}$ be a multipartite
quantum state. Then the following inequality holds%
\begin{multline}
-\log\sup_{t\in\mathbb{R}}F\left(  \rho_{A_{1}A_{1}^{\prime}\cdots A_{l}%
A_{l}^{\prime}},\left(  \mathcal{R}_{\rho_{A_{1}A_{1}^{\prime}}%
,\operatorname{Tr}_{A_{1}}}^{P,t}\otimes\cdots\otimes\mathcal{R}_{\rho
_{A_{l}A_{l}^{\prime}},\operatorname{Tr}_{A_{l}}}^{P,t}\right)  \left(
\rho_{A_{1}^{\prime}\cdots A_{l}^{\prime}}\right)  \right)  \\
\leq I\left(  A_{1}A_{1}^{\prime}:\cdots:A_{l}A_{l}^{\prime}\right)  _{\rho
}-I\left(  A_{1}^{\prime}:\cdots:A_{l}^{\prime}\right)  _{\rho}.
\end{multline}
If the state is positive definite, then the following inequality holds as
well:%
\begin{multline}
I\left(  A_{1}A_{1}^{\prime}:\cdots:A_{l}A_{l}^{\prime}\right)  _{\rho
}-I\left(  A_{1}^{\prime}:\cdots:A_{l}^{\prime}\right)  _{\rho}\\
\leq\sup_{t\in\mathbb{R}}D_{\max}\left(  \rho_{A_{1}A_{1}^{\prime}\cdots
A_{l}A_{l}^{\prime}}\middle\Vert\left(  \mathcal{R}_{\rho_{A_{1}A_{1}^{\prime
}},\operatorname{Tr}_{A_{1}}}^{P,t}\otimes\cdots\otimes\mathcal{R}%
_{\rho_{A_{l}A_{l}^{\prime}},\operatorname{Tr}_{A_{l}}}^{P,t}\right)  \left(
\rho_{A_{1}^{\prime}\cdots A_{l}^{\prime}}\right)  \right)  .
\end{multline}

\end{corollary}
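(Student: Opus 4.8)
The plan is to obtain this corollary as a direct instance of Theorem~\ref{thm:rel-ent}, applied to the triple $(\rho,\sigma,\mathcal{N})$ displayed immediately above the statement, followed by a simplification of the recovery map via the parallel composition property of Section~\ref{sec:func}. First I would record the elementary facts that make the substitution legitimate. With $\rho=\rho_{A_1A_1'\cdots A_lA_l'}$, $\sigma=\rho_{A_1A_1'}\otimes\cdots\otimes\rho_{A_lA_l'}$, and $\mathcal{N}=\operatorname{Tr}_{A_1\cdots A_l}$, the partial trace factors over the tensor product, so $\mathcal{N}(\rho)=\rho_{A_1'\cdots A_l'}$ and $\mathcal{N}(\sigma)=\rho_{A_1'}\otimes\cdots\otimes\rho_{A_l'}$. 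Since the $A_iA_i'$-marginal of $\rho$ equals $\rho_{A_iA_i'}$ and the $A_i'$-marginal of $\mathcal{N}(\rho)$ equals $\rho_{A_i'}$, the definition of multipartite information gives $D(\rho\Vert\sigma)=I(A_1A_1':\cdots:A_lA_l')_\rho$ and $D(\mathcal{N}(\rho)\Vert\mathcal{N}(\sigma))=I(A_1':\cdots:A_l')_\rho$, so the relative entropy difference is precisely the quantity on the right-hand side of the corollary.

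To invoke the first part of Theorem~\ref{thm:rel-ent} (i.e.\ to meet Definition~\ref{def:rho-sig-N}) I would check $\operatorname{supp}(\rho)\subseteq\operatorname{supp}(\sigma)$, which follows by iterating the standard two-party inclusion $\operatorname{supp}(\omega_{XY})\subseteq\operatorname{supp}(\omega_X)\otimes\operatorname{supp}(\omega_Y)$ across the $l$ factors. To invoke the second part (Definition~\ref{def:rho-sig-N-2}) when $\rho$ is positive definite, I would note that all the relevant marginals are then positive definite and that $\operatorname{Tr}_{A_1\cdots A_l}$ is realized by the trivial unitary reorganization whose environment is the traced-out systems $A_1\cdots A_l$, with the auxiliary system $E'$ of Definition~\ref{def:rho-sig-N-2} taken trivial. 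Applying Theorem~\ref{thm:rel-ent} then yields the lower bound in terms of $-\log\sup_t F(\rho,\mathcal{R}^{P,t}_{\sigma,\mathcal{N}}(\mathcal{N}(\rho)))$ and, in the positive definite case, the upper bound in terms of $\sup_t D_{\max}(\rho\Vert\mathcal{R}^{P,t}_{\sigma,\mathcal{N}}(\mathcal{N}(\rho)))$.

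It remains only to identify the recovery map. Since $\sigma$ is a tensor product and $\mathcal{N}$ is a tensor product of channels respecting the same factorization, the parallel composition property proved in Section~\ref{sec:func}, applied inductively across the $l$ factors, gives $\mathcal{R}^{P,t}_{\sigma,\mathcal{N}}=\mathcal{R}^{P,t}_{\rho_{A_1A_1'},\operatorname{Tr}_{A_1}}\otimes\cdots\otimes\mathcal{R}^{P,t}_{\rho_{A_lA_l'},\operatorname{Tr}_{A_l}}$. Substituting this identity, together with $\mathcal{N}(\rho)=\rho_{A_1'\cdots A_l'}$, reproduces the two displayed inequalities of the corollary, and the affirmative answer to the open question of Eq.~(7.8) of \cite{W14} is immediate from the first one. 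There is no genuine obstacle in this argument: all the mathematical content is carried by Theorem~\ref{thm:rel-ent} and the parallel composition identity, and the only point requiring a modicum of care is confirming that $(\rho,\sigma,\mathcal{N})$ honestly satisfies the hypotheses of Definitions~\ref{def:rho-sig-N} and~\ref{def:rho-sig-N-2} in the general and positive definite cases respectively.
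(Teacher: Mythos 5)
Your proposal is correct and follows essentially the same route as the paper: it instantiates Theorem~\ref{thm:rel-ent} with $\rho=\rho_{A_1A_1'\cdots A_lA_l'}$, $\sigma=\rho_{A_1A_1'}\otimes\cdots\otimes\rho_{A_lA_l'}$, $\mathcal{N}=\operatorname{Tr}_{A_1\cdots A_l}$, identifies the relative entropy difference with the difference of multipartite informations, and uses the parallel composition property of Section~\ref{sec:func} to write the recovery map in tensor-product form. Your explicit verification of the support condition and of the hypotheses of Definitions~\ref{def:rho-sig-N} and~\ref{def:rho-sig-N-2} only makes explicit what the paper leaves implicit.
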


\subsection{General quantum information measures}

We remark here that the method given in the proof of Theorem~\ref{thm:rel-ent}
can be applied quite generally, even to information quantities which cannot be
written as a difference of relative entropies. To do so, one needs to follow
the recipe outlined in \cite{BSW15a}\ for obtaining a R\'{e}nyi generalization
of the entropic quantity of interest and then apply the same methods of
complex interpolation used in the proof of Theorem~\ref{thm:rel-ent}. The
bounds that one ends up with might not necessarily have a physical
interpretation in terms of recoverability, but it does happen in some cases.

One example of an information quantity for which this does happen and for
which it is not clear how to write it in terms of a relative entropy
difference is the conditional multipartite information of a state $\rho
_{A_{1}\cdots A_{l}C}$:%
\begin{equation}
I\left(  A_{1}:\cdots:A_{l}|C\right)  _{\rho}\equiv\sum_{i=1}^{l}H\left(
A_{i}|C\right)  _{\rho}-H\left(  A_{1}\cdots A_{l}|C\right)  _{\rho}.
\end{equation}
This quantity can be used to define squashed-like entanglement measures
\cite{YHHHOS09,AHS08}. It is not clear how to write this as a relative entropy
difference, but one can follow the recipe given in \cite{BSW15a} to find the
following R\'{e}nyi generalization:%
\begin{equation}
\widetilde{I}_{\alpha}\left(  A_{1}:\cdots:A_{l}|C\right)  _{\rho}\equiv
\frac{2}{\alpha^{\prime}}\log\left\Vert \rho_{A_{1}\cdots A_{l}C}^{1/2}%
\rho_{A_{l}C}^{-\alpha^{\prime}/2}\rho_{C}^{\alpha^{\prime}/2}\cdots
\rho_{A_{2}C}^{-\alpha^{\prime}/2}\rho_{C}^{\alpha^{\prime}/2}\rho_{A_{1}%
C}^{-\alpha^{\prime}/2}\right\Vert _{2\alpha},
\end{equation}
where $\alpha^{\prime}=\left(  \alpha-1\right)  /\alpha$. The quantity
$\widetilde{I}_{\alpha}\left(  A_{1}:\cdots:A_{l}|C\right)  $ converges to
$I\left(  A_{1}:\cdots:A_{l}|C\right)  $ in the limit as $\alpha\rightarrow
1$.\ For $\alpha=1/2$ and in the limit as $\alpha\rightarrow\infty$,
$\widetilde{I}_{\alpha}\left(  A_{1}:\cdots:A_{l}|C\right)  $ reduces to%
\begin{align}
\widetilde{I}_{1/2}\left(  A_{1}:\cdots:A_{l}|C\right)  _{\rho} &  =-\log
F\left(  \rho_{A_{1}\cdots A_{l}C},\left(  \mathcal{R}_{C\rightarrow A_{l}%
C}^{P}\circ\cdots\circ\mathcal{R}_{C\rightarrow A_{2}C}^{P}\right)  \left(
\rho_{A_{1}C}\right)  \right)  ,\\
\widetilde{I}_{\infty}\left(  A_{1}:\cdots:A_{l}|C\right)  _{\rho} &
=D_{\max}\left(  \rho_{A_{1}\cdots A_{l}C}\middle\Vert\left(  \mathcal{R}%
_{C\rightarrow A_{l}C}^{P}\circ\cdots\circ\mathcal{R}_{C\rightarrow A_{2}%
C}^{P}\right)  \left(  \rho_{A_{1}C}\right)  \right)  ,
\end{align}
where $\mathcal{R}_{C\rightarrow A_{i}C}^{P}$ is a Petz recovery map of the
form $\left(  \cdot\right)  \rightarrow\rho_{A_{i}C}^{1/2}\rho_{C}%
^{-1/2}\left(  \cdot\right)  \rho_{C}^{-1/2}\rho_{A_{i}C}^{1/2}$ and the
latter equality requires the state to be positive definite. The quantities
above have an interpretation in terms of \textit{sequential recoverability},
that is where one attempts to use the system $C$ repeatedly in order to
retrieve all of the $A_{i}$ systems one-by-one, for $i\in\left\{
2,\ldots,l\right\}  $. One can exploit the method of proof for
Theorem~\ref{thm:rel-ent} to obtain the following bounds:

\begin{theorem}
Let $\rho_{A_{1}\cdots A_{l}C}$ be a density operator. Then the following
inequalities hold%
\begin{equation}
-\log\sup_{t\in\mathbb{R}}F\left(  \rho_{A_{1}\cdots A_{l}C},\left(
\mathcal{R}_{C\rightarrow A_{l}C}^{P,t}\circ\cdots\circ\mathcal{R}%
_{C\rightarrow A_{2}C}^{P,t}\right)  \left(  \rho_{A_{1}C}\right)  \right)
\leq I\left(  A_{1}:\cdots:A_{l}|C\right)  _{\rho},
\end{equation}
where $\mathcal{R}_{C\rightarrow A_{i}C}^{P,t}$ is a rotated Petz recovery map
of the form in \eqref{eq:CMI-petz-recovery}. If the state is positive
definite, then%
\begin{equation}
I\left(  A_{1}:\cdots:A_{l}|C\right)  _{\rho}\leq\sup_{t\in\mathbb{R}}D_{\max
}\left(  \rho_{A_{1}\cdots A_{l}C}\middle\Vert\left(  \mathcal{R}%
_{C\rightarrow A_{l}C}^{P,t}\circ\cdots\circ\mathcal{R}_{C\rightarrow A_{2}%
C}^{P,t}\right)  \left(  \rho_{A_{1}C}\right)  \right)  .
\end{equation}

\end{theorem}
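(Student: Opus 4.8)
The plan is to run the proof of Theorem~\ref{thm:rel-ent} essentially verbatim, feeding the operator Hadamard three-line theorem (Theorem~\ref{thm:hadamard}) an operator-valued function built directly out of the R\'{e}nyi quantity $\widetilde{I}_{\alpha}(A_{1}:\cdots:A_{l}|C)_{\rho}$. Abbreviate $D\equiv\rho_{A_{1}\cdots A_{l}C}$, and understand every marginal $\rho_{A_{i}C}$ and $\rho_{C}$ as tacitly tensored with the identity on the remaining systems. For the first (fidelity) inequality I would set, for $\alpha\in(1/2,1)$ and $\theta=(1-\alpha)/\alpha\in(0,1)$,
\[
G(z)\equiv D^{1/2}\,\rho_{A_{l}C}^{z/2}\rho_{C}^{-z/2}\,\rho_{A_{l-1}C}^{z/2}\rho_{C}^{-z/2}\cdots\rho_{A_{2}C}^{z/2}\rho_{C}^{-z/2}\,\rho_{A_{1}C}^{z/2},
\]
together with $p_{0}=2$ and $p_{1}=1$, so that $p_{\theta}=2/(1+\theta)=2\alpha$. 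This $G$ is entire in $z$ (with the spectral convention for non-integer powers) and bounded on the strip $S$, since each factor has operator norm bounded by a fixed constant for $0\le\operatorname{Re}(z)\le 1$; hence Theorem~\ref{thm:hadamard} applies, and moreover $\|G(\theta)\|_{p_{\theta}}$ is exactly the norm appearing in $\widetilde{I}_{\alpha}$, so that $\widetilde{I}_{\alpha}(A_{1}:\cdots:A_{l}|C)_{\rho}=-\tfrac{2}{\theta}\log\|G(\theta)\|_{p_{\theta}}$.

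The content is in the two endpoint suprema. On $\operatorname{Re}(z)=0$ the product to the right of $D^{1/2}$ is a product of partial isometries, so $M_{0}=\sup_{t}\|G(it)\|_{2}\le\|D^{1/2}\|_{2}=1$. On $\operatorname{Re}(z)=1$ I would split each exponent as $\tfrac{1+it}{2}=\tfrac12+\tfrac{it}{2}$, commute powers of one and the same operator past one another to regroup, and recognize the outcome as $D^{1/2}M_{l}^{t/2}M_{l-1}^{t/2}\cdots M_{2}^{t/2}\,\rho_{A_{1}C}^{1/2}\rho_{A_{1}C}^{it/2}$, where $M_{i}^{s}\equiv\rho_{A_{i}C}^{is}\rho_{A_{i}C}^{1/2}\rho_{C}^{-1/2}\rho_{C}^{-is}$ is precisely the operator for which $\mathcal{R}_{C\rightarrow A_{i}C}^{P,s}(\cdot)=M_{i}^{s}(\cdot)(M_{i}^{s})^{\dagger}$. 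The trailing partial isometry $\rho_{A_{1}C}^{it/2}$ can be pushed off to the right without changing the trace norm (the range of everything to its left lies in $\operatorname{supp}(\rho_{A_{1}C})$), and then the standard fidelity identity $\sqrt{F}(\omega,BB^{\dagger})=\|\omega^{1/2}B\|_{1}$ with $B=M_{l}^{t/2}\cdots M_{2}^{t/2}\rho_{A_{1}C}^{1/2}$ identifies $M_{1}$, after relabeling $t/2\mapsto t$, with $[\sup_{t}F(\rho_{A_{1}\cdots A_{l}C},(\mathcal{R}^{P,t}_{C\rightarrow A_{l}C}\circ\cdots\circ\mathcal{R}^{P,t}_{C\rightarrow A_{2}C})(\rho_{A_{1}C}))]^{1/2}$. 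The three-line bound $\|G(\theta)\|_{p_{\theta}}\le M_{0}^{1-\theta}M_{1}^{\theta}$ then yields $\widetilde{I}_{\alpha}(A_{1}:\cdots:A_{l}|C)_{\rho}\ge-\log\sup_{t}F(\cdots)$ for every $\alpha\in(1/2,1)$, and letting $\alpha\nearrow 1$ with the stated limit $\widetilde{I}_{\alpha}\to I(A_{1}:\cdots:A_{l}|C)$ gives the first inequality.

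For the second inequality one assumes $\rho_{A_{1}\cdots A_{l}C}$ positive definite and takes instead
\[
G(z)\equiv D^{1/2}\,\rho_{A_{l}C}^{-z/2}\rho_{C}^{z/2}\,\rho_{A_{l-1}C}^{-z/2}\rho_{C}^{z/2}\cdots\rho_{A_{2}C}^{-z/2}\rho_{C}^{z/2}\,\rho_{A_{1}C}^{-z/2},
\]
with $p_{0}=2$, $p_{1}=\infty$, and $\theta=(\alpha-1)/\alpha$ for $\alpha\in(1,\infty)$, so that $p_{\theta}=2/(1-\theta)=2\alpha$ and $\widetilde{I}_{\alpha}=\tfrac{2}{\theta}\log\|G(\theta)\|_{p_{\theta}}$. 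Again $M_{0}=1$; for $M_{1}$ one passes to the adjoint $G(1+it)^{\dagger}$ (which lets a unitary $\rho_{A_{1}C}^{it/2}$ be absorbed at the left end), splits the exponents as before, and uses the identity $[\exp D_{\max}(\omega\Vert BB^{\dagger})]^{1/2}=\|B^{-1}\omega^{1/2}\|_{\infty}$ together with $(M_{i}^{s})^{-1}=\rho_{C}^{1/2+is}\rho_{A_{i}C}^{-1/2-is}$ to see that $M_{1}=[\exp\sup_{t}D_{\max}(\rho_{A_{1}\cdots A_{l}C}\Vert(\mathcal{R}^{P,t}_{C\rightarrow A_{l}C}\circ\cdots\circ\mathcal{R}^{P,t}_{C\rightarrow A_{2}C})(\rho_{A_{1}C}))]^{1/2}$. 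The three-line bound then gives $\widetilde{I}_{\alpha}\le\sup_{t}D_{\max}(\cdots)$ for $\alpha\in(1,\infty)$, and $\alpha\searrow 1$ concludes.

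The main obstacle is exactly this $M_{1}$ computation: one must carefully push the phase factors $\rho_{A_{i}C}^{\pm it/2}$ and $\rho_{C}^{\pm it/2}$ through a long product whose successive factors act on overlapping but distinct systems (neighbouring factors share only $C$), so that only powers of a single fixed operator may be commuted. Organizing this bookkeeping --- collecting the factors into the maps $M_{i}^{t/2}$, checking that the leftover phase can be discarded without changing the relevant Schatten norm, and confirming that the resulting product is indeed the sequential recovery $\mathcal{R}^{P,t}_{C\rightarrow A_{l}C}\circ\cdots\circ\mathcal{R}^{P,t}_{C\rightarrow A_{2}C}$ applied to $\rho_{A_{1}C}$ --- is where the care lies; everything else reduces mechanically to Theorem~\ref{thm:hadamard} and the known $\alpha\to 1$ limit of $\widetilde{I}_{\alpha}$.
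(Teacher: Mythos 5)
Your proposal is correct and follows exactly the route the paper intends: it instantiates the complex-interpolation argument of Theorem~\ref{thm:rel-ent} with the operator family underlying $\widetilde{I}_{\alpha}(A_{1}:\cdots:A_{l}|C)_{\rho}$, identifies $M_{0}\le 1$ and $M_{1}$ with the fidelity (resp.\ $D_{\max}$) of the sequential rotated Petz recovery, and takes $\alpha\to 1$. The bookkeeping you describe for grouping the phases into the maps $M_{i}^{t/2}$, discarding the trailing partial isometry, and recognizing the composition $\mathcal{R}^{P,t}_{C\rightarrow A_{l}C}\circ\cdots\circ\mathcal{R}^{P,t}_{C\rightarrow A_{2}C}$ applied to $\rho_{A_{1}C}$ is exactly what the paper's sketch ("exploit the method of proof of Theorem~\ref{thm:rel-ent}") requires.
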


\begin{remark}
An advantage of the bound given above is that there is no prefactor depending
on the number of parties involved, which one obtains by applying the chain
rule and the bound from \cite{FR14} multiple times (cf., Appendix B of
\cite{LW14}). Furthermore, it is not known how to apply the methods of
\cite{FR14,SOR15} in order to arrive at a bound of the form above (which does
not have a prefactor depending on the number of parties).
\end{remark}

\subsection{Approximate quantum error correction}

The goal of quantum error correction is to protect quantum information from
the deleterious effects of a quantum channel $\mathcal{N}$ by encoding it into
a subspace of the full Hilbert space, such that one can later recover the
encoded data after performing a recovery operation. In physical situations,
one can never have perfect error correction and instead aims for approximate
error correction (see, e.g., \cite{Mandayam2012} and references therein). In
more detail, let $\mathcal{H}$ be a Hilbert space and let $\Pi$ be a
projection onto some subspace of $\mathcal{H}$, which is referred to as the
codespace. Suppose that $\rho$ is a density operator with support only in the
subspace onto which $\Pi$ projects. Then, by choosing%
\begin{equation}
\rho=\rho,\ \ \ \ \sigma=\Pi,\ \ \ \ \mathcal{N}=\mathcal{N},
\end{equation}
and applying Theorem~\ref{thm:rel-ent}, we find that the following inequality
holds%
\begin{equation}
-\log\left[  \sup_{t\in\mathbb{R}}F\left(  \rho,\mathcal{R}_{\Pi,\mathcal{N}%
}^{P,t}\left(  \mathcal{N(}\rho)\right)  \right)  \right]  \leq D\left(
\rho\Vert\Pi\right)  -D\left(  \mathcal{N(}\rho)\Vert\mathcal{N(}\Pi)\right)
,
\end{equation}
where $\mathcal{R}_{\Pi,\mathcal{N}}^{P,t}$ is the following rotated Petz
recovery map:%
\begin{equation}
\mathcal{R}_{\Pi,\mathcal{N}}^{P,t}\left(  \cdot\right)  \equiv\left(
\mathcal{U}_{\Pi,t}\circ\mathcal{R}_{\Pi,\mathcal{N}}^{P}\circ\mathcal{U}%
_{\mathcal{N(}\Pi),-t}\right)  \left(  \cdot\right)  ,
\end{equation}
$\mathcal{R}_{\Pi,\mathcal{N}}^{P}$ is the Petz recovery or \textquotedblleft
transpose\textquotedblright\ map defined as%
\begin{equation}
\mathcal{R}_{\Pi,\mathcal{N}}^{P}\left(  \cdot\right)  =\Pi\mathcal{N}^{\dag
}\left(  \left[  \mathcal{N(}\Pi)\right]  ^{-1/2}\left(  \cdot\right)  \left[
\mathcal{N(}\Pi)\right]  ^{-1/2}\right)  \Pi,
\end{equation}
and $\mathcal{U}_{\Pi,t}$ and $\mathcal{U}_{\mathcal{N(}\Pi),-t}$ are partial
isometric maps defined from \eqref{eq:unitaries} and acting as the identity
outside the support of $\Pi$ and $\mathcal{N(}\Pi)$, respectively.

The inequality above is not particularly useful in the context of approximate
quantum error correction, but we have stated it to motivate further
developments. In particular, the recovery map $\mathcal{R}_{\Pi,\mathcal{N}%
}^{P,t}$ has a dependence on the particular state $\rho$ being sent through
the channel. Of course, the receiver does not know which state is being
transmitted through the channel at any given instant and thus cannot apply the
decoder given in the above bound. Nor should we allow the encoder to know
which state $\rho$ is being transmitted and send $t$ via a noiseless classical
channel. Thus, it would be ideal if the bound stated above would hold for a
recovery map that has no dependence on the input (that is, if the recovery map
were to have a universality property, similar to that discussed in
\cite{SOR15} and Remark~\ref{rem:perfect-recovery}).\ Some possibilities for
universal recovery maps are the Petz recovery map itself or an averaged
channel where $t$ is chosen randomly according to some distribution.
Conjecture~26 of \cite{SBW14}, if true, would imply the bound given above for
the choice $t=0$ (the Petz recovery map).

\section{Discussion}

\label{sec:discuss}Entropy inequalities such as strong subadditivity of
quantum entropy \cite{LR73,PhysRevLett.30.434} and monotonicity of quantum
relative entropy \cite{Lindblad1975,U77} have played a fundamental role in
quantum information theory and other areas of physics. Establishing entropy
inequalities with physically meaningful remainder terms has been a topic of
recent interest in quantum information theory (see \cite{LW14,FR14,SOR15} and
references therein). A breakthrough result from \cite{FR14} established an
inequality of the form in Theorem~\ref{thm:CMI}, with however essentially
nothing being known about the input and output unitaries. The methods of
\cite{FR14} were generalized in \cite{BLW14} to produce an inequality of the
form in \eqref{eq:rel-ent-ineq}, again with essentially nothing known about
the input and output unitaries. Meanwhile, operational proofs for physically
meaningful lower bounds on conditional mutual information have appeared as
well \cite{BHOS14,BT15}, the latter in part based on the notion of fidelity of
recovery \cite{SW14}. Recent work has now established that a recovery map for
the conditional mutual information can possess a universality property
\cite{SOR15}, in the sense that it need not depend on the state of the system
$B$ (the system that is not \textquotedblleft lost and
recovered\textquotedblright\ nor \textquotedblleft used to
recover\textquotedblright). As discussed in Remark~\ref{rem:perfect-recovery},
the recovery map given in \eqref{eq:main-result} does not possess the
universality property. Also, the structure of the input and output unitaries
given in \eqref{eq:main-result} is essentially the same as that found in
previous work \cite{FR14,SOR15} (see the discussion around Eqs.~(120)-(122) of
\cite{SOR15}). However, the argument given here to arrive at this structure is
more direct than that in \cite{FR14,SOR15} and applies as well to the recovery
map in \eqref{eq:rel-ent-ineq} for a relative entropy difference.

An important open question is to determine whether we could take $t=0$ and
still have the inequalities hold, as conjectured previously
\cite{BSW14,SBW14,LW14}. More generally, it is still open to determine whether
the R\'{e}nyi quantities in \eqref{eq:renyi-diff} and \eqref{eq:Renyi-CMI} are
monotone non-decreasing with respect to the R\'{e}nyi parameter $\alpha$.

In light of the efforts put into addressing the recovery question, it is
pleasing that the Hadamard three-line theorem leads to simple proofs. This
theorem has already been put to good use in characterizing local state
transformations \cite{DB15}\ and in obtaining chain rules for R\'{e}nyi
entropies \cite{D14}, for example, and it should be interesting to find
further applications of it in the context of quantum information theory.

\bigskip

\textbf{Acknowledgements.} I am especially grateful to my collaborators Mario
Berta and Kaushik P. Seshadreesan for our many joint contributions on this
topic \cite{BSW14,SBW14,BSW15a,BLW14,SW14}, upon which the present work
builds. I acknowledge many discussions with and/or feedback on the manuscript
from Mario Berta, Tom Cooney, Nilanjana Datta, Frederic Dupuis, Omar Fawzi,
Rupert Frank, Milan Mosonyi, Renato Renner, Volkher Scholz, Kaushik
Seshadreesan, David Sutter, Marco Tomamichel, Stephanie Wehner, and Andreas
Winter. I am grateful to Stephanie Wehner and her group for hospitality during
a research visit to TU\ Delft (May 2015) and acknowledge support from startup
funds from the Department of Physics and Astronomy at LSU, the NSF\ under
Award No.~CCF-1350397, and the DARPA Quiness Program through US Army Research
Office award W31P4Q-12-1-0019.

\appendix

\section{Convergence to the quantum relative entropy difference}

\label{app:limit}

\begin{definition}
Let $\rho$, $\sigma$, and $\mathcal{N}$ be as given in
Definition~\ref{def:rho-sig-N}. For $\alpha\in\left(  0,1\right)  \cup\left(
1,\infty\right)  $, let%
\begin{equation}
\widetilde{\Delta}_{\alpha}\left(  \rho,\sigma,\mathcal{N}\right)  =\frac
{1}{\alpha-1}\log\widetilde{Q}_{\alpha}\left(  \rho,\sigma,\mathcal{N}\right)
,
\end{equation}
where%
\begin{equation}
\widetilde{Q}_{\alpha}\left(  \rho,\sigma,\mathcal{N}\right)  \equiv\left\Vert
\left(  \mathcal{N}\left(  \rho\right)  ^{\left(  1-\alpha\right)  /2\alpha
}\mathcal{N}\left(  \sigma\right)  ^{\left(  \alpha-1\right)  /2\alpha}\otimes
I_{E}\right)  U\sigma^{\left(  1-\alpha\right)  /2\alpha}\rho^{1/2}\right\Vert
_{2\alpha}^{2\alpha}.
\end{equation}

\end{definition}

\begin{theorem}
\label{thm:tilde-alpha-limit}Let $\rho$, $\sigma$, and $\mathcal{N}$ be as
given in Definition~\ref{def:rho-sig-N}. The following limit holds%
\begin{equation}
\lim_{\alpha\rightarrow1}\widetilde{\Delta}_{\alpha}\left(  \rho
,\sigma,\mathcal{N}\right)  =D\left(  \rho\Vert\sigma\right)  -D\left(
\mathcal{N}\left(  \rho\right)  \Vert\mathcal{N}\left(  \sigma\right)
\right)  .
\end{equation}

\end{theorem}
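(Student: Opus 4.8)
The plan is to recognize the limit as a derivative. Following the definition of $\widetilde{\Delta}_{\alpha}$ in this appendix, write $\widetilde{\Delta}_{\alpha}=\tfrac{1}{\alpha-1}\log\widetilde{Q}_{\alpha}$ with $\widetilde{Q}_{\alpha}=\operatorname{Tr}\{(Y_{\alpha}^{\dag}Y_{\alpha})^{\alpha}\}$, where $Y_{\alpha}$ is the operator appearing inside the norm in \eqref{eq:renyi-diff}, namely $Y_{\alpha}=([\mathcal{N}(\rho)]^{\gamma}[\mathcal{N}(\sigma)]^{-\gamma}\otimes I_{E})U\sigma^{\gamma}\rho^{1/2}$ with $\gamma=\gamma(\alpha)=(1-\alpha)/2\alpha$ and $U$ an isometric extension of $\mathcal{N}$. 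At $\alpha=1$ every non-trivial power has exponent $\gamma(1)=0$, so after absorbing the support projections this leaves $Y_{1}=U\rho^{1/2}$ — here one uses $\operatorname{supp}\rho\subseteq\operatorname{supp}\sigma$, $\operatorname{supp}\mathcal{N}(\rho)\subseteq\operatorname{supp}\mathcal{N}(\sigma)$, and $\operatorname{supp}(U\rho U^{\dag})\subseteq\operatorname{supp}(\mathcal{N}(\rho))\otimes\mathcal{H}_{E}$ (the last being the standard fact $\operatorname{supp}(\omega_{BE})\subseteq\operatorname{supp}(\operatorname{Tr}_{E}\omega_{BE})\otimes\mathcal{H}_{E}$), so that $(\Pi_{\mathcal{N}(\rho)}\otimes I_{E})U\rho^{1/2}=U\rho^{1/2}$ and $\Pi_{\sigma}\rho^{1/2}=\rho^{1/2}$. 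Hence $\widetilde{Q}_{1}=\operatorname{Tr}\{\rho^{1/2}U^{\dag}U\rho^{1/2}\}=\operatorname{Tr}\rho=1$ and $\log\widetilde{Q}_{1}=0$, so the limit has the form $0/0$. Once differentiability of $\alpha\mapsto\widetilde{Q}_{\alpha}$ at $\alpha=1$ is established (discussed below), L'H\^{o}pital's rule gives $\lim_{\alpha\to1}\widetilde{\Delta}_{\alpha}=\widetilde{Q}_{1}'/\widetilde{Q}_{1}=\widetilde{Q}_{1}'$, the two-sided limit being handled automatically since $\widetilde{Q}_{\alpha}$ extends analytically across $\alpha=1$.

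Next I would compute $\widetilde{Q}_{1}'$ by differentiating through the two sources of $\alpha$-dependence in $\operatorname{Tr}\{(Y_{\alpha}^{\dag}Y_{\alpha})^{\alpha}\}$: the explicit exponent $\alpha$ and the operator $M_{\alpha}:=Y_{\alpha}^{\dag}Y_{\alpha}$ (with $M_{1}=\rho$). The chain rule gives $\widetilde{Q}_{1}'=\operatorname{Tr}\{M_{1}\log M_{1}\}+\tfrac{d}{d\alpha}\operatorname{Tr}\{M_{\alpha}\}\big|_{\alpha=1}=\operatorname{Tr}\{\rho\log\rho\}+\tfrac{d}{d\alpha}\operatorname{Tr}\{M_{\alpha}\}\big|_{\alpha=1}$. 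For the second term I would pass to the variable $\gamma$ and apply the product rule to $Y_{\gamma}$; each power-factor $A^{\gamma}$ contributes its logarithm $\log A$ (in the convention of the paper) as $\gamma\to0$, so $\tfrac{d}{d\gamma}\operatorname{Tr}\{Y_{\gamma}^{\dag}Y_{\gamma}\}\big|_{\gamma=0}=2\operatorname{Tr}\{Y_{0}^{\dag}Y_{0}'\}$. Using $U^{\dag}U=I_{S}$, the identities $\rho^{1/2}U^{\dag}(\Pi_{\mathcal{N}(\rho)}\otimes I_{E})=\rho^{1/2}U^{\dag}$ and $\mathcal{N}^{\dag}(X)=U^{\dag}(X\otimes I_{E})U$, and the adjoint relation $\operatorname{Tr}\{\mathcal{N}^{\dag}(X)Y\}=\operatorname{Tr}\{X\mathcal{N}(Y)\}$, this collapses to $\operatorname{Tr}\{\mathcal{N}(\rho)\log\mathcal{N}(\rho)\}-\operatorname{Tr}\{\mathcal{N}(\rho)\log\mathcal{N}(\sigma)\}+\operatorname{Tr}\{\rho\log\sigma\}=D(\mathcal{N}(\rho)\Vert\mathcal{N}(\sigma))+\operatorname{Tr}\{\rho\log\sigma\}$. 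Multiplying by $\gamma'(1)=-\tfrac12$ turns the second term into $-D(\mathcal{N}(\rho)\Vert\mathcal{N}(\sigma))-\operatorname{Tr}\{\rho\log\sigma\}$, and therefore $\widetilde{Q}_{1}'=\operatorname{Tr}\{\rho\log\rho\}-\operatorname{Tr}\{\rho\log\sigma\}-D(\mathcal{N}(\rho)\Vert\mathcal{N}(\sigma))=D(\rho\Vert\sigma)-D(\mathcal{N}(\rho)\Vert\mathcal{N}(\sigma))$, which is the claimed value.

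The hard part will be the regularity. Unlike in \cite{SBW14}, the operators $\rho$, $\sigma$, $\mathcal{N}(\rho)$, $\mathcal{N}(\sigma)$ are not assumed positive definite, the powers use the convention that kills the kernel, and a priori $\widetilde{Q}_{\alpha}$ need not even be differentiable at $\alpha=1$ if $\operatorname{rank}(M_{\alpha})$ jumps there. The key observation is that $\operatorname{rank}(Y_{\alpha})\le\operatorname{rank}(\rho)$ for every $\alpha$, since $\rho^{1/2}$ is the rightmost factor, while $\operatorname{rank}(Y_{1})=\operatorname{rank}(U\rho^{1/2})=\operatorname{rank}(\rho)$; by lower semicontinuity of rank, $\operatorname{rank}(M_{\alpha})=\operatorname{rank}(\rho)$ for all $\alpha$ in a neighborhood of $1$. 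Since $\alpha\mapsto M_{\alpha}$ is real-analytic (each $A^{\gamma}=\sum_{\lambda_{i}\neq0}\lambda_{i}^{\gamma}|i\rangle\langle i|$ is entire in $\gamma$, and $\gamma(\alpha)$ is analytic for $\alpha>0$) and of locally constant rank near $\alpha=1$, its nonzero eigenvalues stay bounded away from $0$, so the support projection $\Pi_{M_{\alpha}}$ is real-analytic there, hence $\log M_{\alpha}=\log(M_{\alpha}+I-\Pi_{M_{\alpha}})$ is real-analytic and $\widetilde{Q}_{\alpha}=\operatorname{Tr}\{e^{\alpha\log M_{\alpha}}\}-(\dim\mathcal{H}_{S}-\operatorname{rank}\rho)$ is real-analytic near $\alpha=1$; this legitimizes both the L'H\^{o}pital step and the term-by-term differentiation above. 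The remaining work is the routine bookkeeping of the support projections $\Pi_{\sigma}$, $\Pi_{\mathcal{N}(\rho)}$, $\Pi_{\mathcal{N}(\sigma)}$ produced when exponents hit $0$, which are absorbed exactly by the inclusions built into Definition~\ref{def:rho-sig-N}; this is the one point where the hypotheses of that definition are genuinely used.
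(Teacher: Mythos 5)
Your proof is correct and follows essentially the same route as the paper's: show $\widetilde{Q}_{1}=1$ via the support inclusions of Definition~\ref{def:rho-sig-N}, interpret the limit as $\frac{d}{d\alpha}\widetilde{Q}_{\alpha}\big|_{\alpha=1}$, split that derivative into the outer-exponent contribution $\operatorname{Tr}\{\rho\log\rho\}$ and the inner-operator contribution obtained by differentiating the power factors, and absorb the resulting support projections to land on $D(\rho\Vert\sigma)-D(\mathcal{N}(\rho)\Vert\mathcal{N}(\sigma))$. The differences are organizational rather than substantive: you work directly with $Y_{\alpha}^{\dag}Y_{\alpha}$ in the variable $\gamma$ instead of the paper's two-parameter function $\widetilde{Q}_{\alpha,\beta}$, and you add the constant-rank/analyticity argument justifying differentiability at $\alpha=1$, a point the paper leaves implicit.
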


\begin{proof}
Let $\Pi_{\omega}$ denote the projection onto the support of $\omega$. From
the condition $\operatorname{supp}\left(  \rho\right)  \subseteq
\operatorname{supp}\left(  \sigma\right)  $, it follows that
$\operatorname{supp}\left(  \mathcal{N}\left(  \rho\right)  \right)
\subseteq\operatorname{supp}\left(  \mathcal{N}\left(  \sigma\right)  \right)
$ \cite[Appendix~B.4]{R05}. We can then conclude that%
\begin{equation}
\Pi_{\sigma}\Pi_{\rho}=\Pi_{\rho},\ \ \ \ \ \ \ \ \Pi_{\mathcal{N}\left(
\rho\right)  }\Pi_{\mathcal{N}\left(  \sigma\right)  }=\Pi_{\mathcal{N}\left(
\rho\right)  }.\label{eq:Pi-rho}%
\end{equation}
We also know that $\operatorname{supp}\left(  U\rho U^{\dag}\right)
\subseteq\operatorname{supp}\left(  \mathcal{N}\left(  \rho\right)  \otimes
I_{E}\right)  $ \cite[Appendix~B.4]{R05}, so that%
\begin{equation}
\left(  \Pi_{\mathcal{N}\left(  \rho\right)  }\otimes I_{E}\right)  \Pi_{U\rho
U^{\dag}}=\Pi_{U\rho U^{\dag}}.\label{eq:Pi-U-rho}%
\end{equation}
When $\alpha=1$, we find from the above facts that%
\begin{align}
\widetilde{Q}_{1}\left(  \rho,\sigma,\mathcal{N}\right)   &  =\left\Vert
\left(  \Pi_{\mathcal{N}\left(  \rho\right)  }\Pi_{\mathcal{N}\left(
\sigma\right)  }\otimes I_{E}\right)  U\Pi_{\sigma}\rho^{1/2}\right\Vert
_{2}^{2}=\left\Vert \left(  \Pi_{\mathcal{N}\left(  \rho\right)  }\otimes
I_{E}\right)  U\Pi_{\rho}\rho^{1/2}\right\Vert _{2}^{2}\nonumber\\
&  =\left\Vert \left(  \Pi_{\mathcal{N}\left(  \rho\right)  }\otimes
I_{E}\right)  \Pi_{U\rho U^{\dag}}U\rho^{1/2}\right\Vert _{2}^{2}=\left\Vert
\Pi_{U\rho U^{\dag}}U\rho^{1/2}\right\Vert _{2}^{2}=\left\Vert \rho
^{1/2}\right\Vert _{2}^{2}=1.
\end{align}
So from the definition of the derivative, this means that%
\begin{align}
\lim_{\alpha\rightarrow1}\widetilde{\Delta}_{\alpha}\left(  \rho
,\sigma,\mathcal{N}\right)   &  =\lim_{\alpha\rightarrow1}\frac{\log
\widetilde{Q}_{\alpha}\left(  \rho,\sigma,\mathcal{N}\right)  -\log
\widetilde{Q}_{1}\left(  \rho,\sigma,\mathcal{N}\right)  }{\alpha-1}=\left.
\frac{d}{d\alpha}\left[  \log\widetilde{Q}_{\alpha}\left(  \rho,\sigma
,\mathcal{N}\right)  \right]  \right\vert _{\alpha=1}\nonumber\\
&  =\frac{1}{\widetilde{Q}_{1}\left(  \rho,\sigma,\mathcal{N}\right)  }\left.
\frac{d}{d\alpha}\left[  \widetilde{Q}_{\alpha}\left(  \rho,\sigma
,\mathcal{N}\right)  \right]  \right\vert _{\alpha=1}=\left.  \frac{d}%
{d\alpha}\left[  \widetilde{Q}_{\alpha}\left(  \rho,\sigma,\mathcal{N}\right)
\right]  \right\vert _{\alpha=1}.\label{eq:limit-a-1-exp}%
\end{align}
Let%
\begin{equation}
\alpha^{\prime}\equiv\frac{\alpha-1}{\alpha}.
\end{equation}
Now consider that%
\begin{equation}
\widetilde{Q}_{\alpha}\left(  \rho,\sigma,\mathcal{N}\right)
=\operatorname{Tr}\left\{  \left[  \rho^{1/2}\sigma^{-\alpha^{\prime}%
/2}\mathcal{N}^{\dag}\left(  \mathcal{N}\left(  \sigma\right)  ^{\alpha
^{\prime}/2}\mathcal{N}\left(  \rho\right)  ^{-\alpha^{\prime}}\mathcal{N}%
\left(  \sigma\right)  ^{\alpha^{\prime}/2}\right)  \sigma^{-\alpha^{\prime
}/2}\rho^{1/2}\right]  ^{\alpha}\right\}  .
\end{equation}
Define the function%
\begin{equation}
\widetilde{Q}_{\alpha,\beta}\left(  \rho,\sigma,\mathcal{N}\right)
\equiv\operatorname{Tr}\left\{  \left[  \rho^{1/2}\sigma^{-\alpha^{\prime}%
/2}\mathcal{N}^{\dag}\left(  \mathcal{N}\left(  \sigma\right)  ^{\alpha
^{\prime}/2}\mathcal{N}\left(  \rho\right)  ^{-\alpha^{\prime}}\mathcal{N}%
\left(  \sigma\right)  ^{\alpha^{\prime}/2}\right)  \sigma^{-\alpha^{\prime
}/2}\rho^{1/2}\right]  ^{\beta}\right\}  ,
\end{equation}
and consider that%
\begin{align}
\left.  \frac{d}{d\alpha}\left[  \widetilde{Q}_{\alpha}\left(  \rho
,\sigma,\mathcal{N}\right)  \right]  \right\vert _{\alpha=1} &  =\left.
\frac{d}{d\alpha}\widetilde{Q}_{\alpha,\alpha}\left(  \rho,\sigma
,\mathcal{N}\right)  \right\vert _{\alpha=1}\nonumber\\
&  =\left.  \frac{d}{d\alpha}\widetilde{Q}_{\alpha,1}\left(  \rho
,\sigma,\mathcal{N}\right)  \right\vert _{\alpha=1}+\left.  \frac{d}{d\beta
}\widetilde{Q}_{1,\beta}\left(  \rho,\sigma,\mathcal{N}\right)  \right\vert
_{\beta=1}.\label{eq:chain-break-up}%
\end{align}
We first compute $\widetilde{Q}_{1,\beta}\left(  \rho,\sigma,\mathcal{N}%
\right)  $ as follows:%
\begin{align}
\widetilde{Q}_{1,\beta}\left(  \rho,\sigma,\mathcal{N}\right)   &
=\operatorname{Tr}\left\{  \left[  \rho^{1/2}\Pi_{\sigma}\mathcal{N}^{\dag
}\left(  \Pi_{\mathcal{N}\left(  \sigma\right)  }\Pi_{\mathcal{N}\left(
\rho\right)  }\Pi_{\mathcal{N}\left(  \sigma\right)  }\right)  \Pi_{\sigma
}\rho^{1/2}\right]  ^{\beta}\right\}  =\operatorname{Tr}\left\{  \left[
\rho^{1/2}\mathcal{N}^{\dag}\left(  \Pi_{\mathcal{N}\left(  \rho\right)
}\right)  \rho^{1/2}\right]  ^{\beta}\right\}  \nonumber\\
&  =\operatorname{Tr}\left\{  \left[  \rho^{1/2}U^{\dag}\left(  \Pi
_{\mathcal{N}\left(  \rho\right)  }\otimes I_{E}\right)  U\rho^{1/2}\right]
^{\beta}\right\}  =\operatorname{Tr}\left\{  \left[  \left(  \Pi
_{\mathcal{N}\left(  \rho\right)  }\otimes I_{E}\right)  U\rho U^{\dag}\left(
\Pi_{\mathcal{N}\left(  \rho\right)  }\otimes I_{E}\right)  \right]  ^{\beta
}\right\}  \nonumber\\
&  =\operatorname{Tr}\left\{  \left[  U\rho U^{\dag}\right]  ^{\beta}\right\}
=\operatorname{Tr}\left\{  \rho^{\beta}\right\}  .
\end{align}
So then%
\begin{equation}
\left.  \frac{d}{d\beta}\widetilde{Q}_{1,\beta}\left(  \rho,\sigma
,\mathcal{N}\right)  \right\vert _{\beta=1}=\left.  \frac{d}{d\beta
}\operatorname{Tr}\left\{  \rho^{\beta}\right\}  \right\vert _{\beta
=1}=\left.  \operatorname{Tr}\left\{  \rho^{\beta}\log\rho\right\}
\right\vert _{\beta=1}=\operatorname{Tr}\left\{  \rho\log\rho\right\}
.\label{eq:Q_beta_1_term}%
\end{equation}
Now we turn to the other term $\frac{d}{d\alpha}\widetilde{Q}_{\alpha
,1}\left(  \rho,\sigma,\mathcal{N}\right)  $. First consider that%
\begin{align}
\frac{d}{d\alpha}\left(  -\alpha^{\prime}\right)   &  =\frac{d}{d\alpha
}\left(  \frac{1-\alpha}{\alpha}\right)  =\frac{d}{d\alpha}\left(  \frac
{1}{\alpha}-1\right)  =-\frac{1}{\alpha^{2}},\\
\widetilde{Q}_{\alpha,1}\left(  \rho,\sigma,\mathcal{N}\right)   &
=\operatorname{Tr}\left\{  \rho\sigma^{-\alpha^{\prime}/2}\mathcal{N}^{\dag
}\left(  \mathcal{N}\left(  \sigma\right)  ^{\alpha^{\prime}/2}\mathcal{N}%
\left(  \rho\right)  ^{-\alpha^{\prime}}\mathcal{N}\left(  \sigma\right)
^{\alpha^{\prime}/2}\right)  \sigma^{-\alpha^{\prime}/2}\right\}  .
\end{align}
Now we show that $\frac{d}{d\alpha}\widetilde{Q}_{\alpha,1}\left(  \rho
,\sigma,\mathcal{N}\right)  $ is equal to%
\begin{multline}
\frac{d}{d\alpha}\operatorname{Tr}\left\{  \rho\sigma^{-\alpha^{\prime}%
/2}\mathcal{N}^{\dag}\left(  \mathcal{N}\left(  \sigma\right)  ^{\alpha
^{\prime}/2}\mathcal{N}\left(  \rho\right)  ^{-\alpha^{\prime}}\mathcal{N}%
\left(  \sigma\right)  ^{\alpha^{\prime}/2}\right)  \sigma^{-\alpha^{\prime
}/2}\right\}  \\
=\operatorname{Tr}\left\{  \rho\left[  \frac{d}{d\alpha}\sigma^{-\alpha
^{\prime}/2}\right]  \mathcal{N}^{\dag}\left(  \mathcal{N}\left(
\sigma\right)  ^{\alpha^{\prime}/2}\mathcal{N}\left(  \rho\right)
^{-\alpha^{\prime}}\mathcal{N}\left(  \sigma\right)  ^{\alpha^{\prime}%
/2}\right)  \sigma^{-\alpha^{\prime}/2}\right\}  \\
+\operatorname{Tr}\left\{  \rho\sigma^{-\alpha^{\prime}/2}\mathcal{N}^{\dag
}\left(  \left[  \frac{d}{d\alpha}\mathcal{N}\left(  \sigma\right)
^{\alpha^{\prime}/2}\right]  \mathcal{N}\left(  \rho\right)  ^{-\alpha
^{\prime}}\mathcal{N}\left(  \sigma\right)  ^{\alpha^{\prime}/2}\right)
\sigma^{-\alpha^{\prime}/2}\right\}  \\
+\operatorname{Tr}\left\{  \rho\sigma^{-\alpha^{\prime}/2}\mathcal{N}^{\dag
}\left(  \mathcal{N}\left(  \sigma\right)  ^{\alpha^{\prime}/2}\left[
\frac{d}{d\alpha}\mathcal{N}\left(  \rho\right)  ^{-\alpha^{\prime}}\right]
\mathcal{N}\left(  \sigma\right)  ^{\alpha^{\prime}/2}\right)  \sigma
^{-\alpha^{\prime}/2}\right\}  \\
+\operatorname{Tr}\left\{  \rho\sigma^{-\alpha^{\prime}/2}\mathcal{N}^{\dag
}\left(  \mathcal{N}\left(  \sigma\right)  ^{\alpha^{\prime}/2}\mathcal{N}%
\left(  \rho\right)  ^{-\alpha^{\prime}}\left[  \frac{d}{d\alpha}%
\mathcal{N}\left(  \sigma\right)  ^{\alpha^{\prime}/2}\right]  \right)
\sigma^{-\alpha^{\prime}/2}\right\}  \\
+\operatorname{Tr}\left\{  \rho\sigma^{-\alpha^{\prime}/2}\mathcal{N}^{\dag
}\left(  \mathcal{N}\left(  \sigma\right)  ^{\alpha^{\prime}/2}\mathcal{N}%
\left(  \rho\right)  ^{-\alpha^{\prime}}\mathcal{N}\left(  \sigma\right)
^{\alpha^{\prime}/2}\right)  \left[  \frac{d}{d\alpha}\sigma^{-\alpha^{\prime
}/2}\right]  \right\}
\end{multline}%
\begin{multline}
=\frac{1}{\alpha^{2}}\Bigg[-\frac{1}{2}\operatorname{Tr}\left\{  \rho\left[
\log\sigma\right]  \sigma^{-\alpha^{\prime}/2}\mathcal{N}^{\dag}\left(
\mathcal{N}\left(  \sigma\right)  ^{\alpha^{\prime}/2}\mathcal{N}\left(
\rho\right)  ^{-\alpha^{\prime}}\mathcal{N}\left(  \sigma\right)
^{\alpha^{\prime}/2}\right)  \sigma^{-\alpha^{\prime}/2}\right\}  \\
+\frac{1}{2}\operatorname{Tr}\left\{  \rho\sigma^{-\alpha^{\prime}%
/2}\mathcal{N}^{\dag}\left(  \left[  \log\mathcal{N}\left(  \sigma\right)
\right]  \mathcal{N}\left(  \sigma\right)  ^{\alpha^{\prime}/2}\mathcal{N}%
\left(  \rho\right)  ^{-\alpha^{\prime}}\mathcal{N}\left(  \sigma\right)
^{\alpha^{\prime}/2}\right)  \sigma^{-\alpha^{\prime}/2}\right\}  \\
-\operatorname{Tr}\left\{  \rho\sigma^{-\alpha^{\prime}/2}\mathcal{N}^{\dag
}\left(  \mathcal{N}\left(  \sigma\right)  ^{\alpha^{\prime}/2}\left[
\log\mathcal{N}\left(  \rho\right)  \right]  \mathcal{N}\left(  \rho\right)
^{-\alpha^{\prime}}\mathcal{N}\left(  \sigma\right)  ^{\alpha^{\prime}%
/2}\right)  \sigma^{-\alpha^{\prime}/2}\right\}  \\
+\frac{1}{2}\operatorname{Tr}\left\{  \rho\sigma^{-\alpha^{\prime}%
/2}\mathcal{N}^{\dag}\left(  \mathcal{N}\left(  \sigma\right)  ^{\alpha
^{\prime}/2}\mathcal{N}\left(  \rho\right)  ^{-\alpha^{\prime}}\mathcal{N}%
\left(  \sigma\right)  ^{\alpha^{\prime}/2}\left[  \log\mathcal{N}\left(
\sigma\right)  \right]  \right)  \sigma^{-\alpha^{\prime}/2}\right\}  \\
-\frac{1}{2}\operatorname{Tr}\left\{  \rho\sigma^{-\alpha^{\prime}%
/2}\mathcal{N}^{\dag}\left(  \mathcal{N}\left(  \sigma\right)  ^{\alpha
^{\prime}/2}\mathcal{N}\left(  \rho\right)  ^{-\alpha^{\prime}}\mathcal{N}%
\left(  \sigma\right)  ^{\alpha^{\prime}/2}\right)  \sigma^{-\alpha^{\prime
}/2}\left[  \log\sigma\right]  \right\}  \Bigg].
\end{multline}
Taking the limit as $\alpha\rightarrow1$ gives%
\begin{multline}
\left.  \frac{d}{d\alpha}\widetilde{Q}_{\alpha,1}\left(  \rho,\sigma
,\mathcal{N}\right)  \right\vert _{\alpha=1}=-\frac{1}{2}\operatorname{Tr}%
\left\{  \rho\left[  \log\sigma\right]  \Pi_{\sigma}\mathcal{N}^{\dag}\left(
\Pi_{\mathcal{N}\left(  \sigma\right)  }\Pi_{\mathcal{N}\left(  \rho\right)
}\Pi_{\mathcal{N}\left(  \sigma\right)  }\right)  \Pi_{\sigma}\right\}  \\
+\frac{1}{2}\operatorname{Tr}\left\{  \rho\Pi_{\sigma}\mathcal{N}^{\dag
}\left(  \left[  \log\mathcal{N}\left(  \sigma\right)  \right]  \Pi
_{\mathcal{N}\left(  \sigma\right)  }\Pi_{\mathcal{N}\left(  \rho\right)  }%
\Pi_{\mathcal{N}\left(  \sigma\right)  }\right)  \Pi_{\sigma}\right\}  \\
-\operatorname{Tr}\left\{  \rho\Pi_{\sigma}\mathcal{N}^{\dag}\left(
\Pi_{\mathcal{N}\left(  \sigma\right)  }\left[  \log\mathcal{N}\left(
\rho\right)  \right]  \Pi_{\mathcal{N}\left(  \rho\right)  }\Pi_{\mathcal{N}%
\left(  \sigma\right)  }\right)  \Pi_{\sigma}\right\}  \\
+\frac{1}{2}\operatorname{Tr}\left\{  \rho\Pi_{\sigma}\mathcal{N}^{\dag
}\left(  \Pi_{\mathcal{N}\left(  \sigma\right)  }\Pi_{\mathcal{N}\left(
\rho\right)  }\Pi_{\mathcal{N}\left(  \sigma\right)  }\left[  \log
\mathcal{N}\left(  \sigma\right)  \right]  \right)  \Pi_{\sigma}\right\}  \\
-\frac{1}{2}\operatorname{Tr}\left\{  \rho\Pi_{\sigma}\mathcal{N}^{\dag
}\left(  \Pi_{\mathcal{N}\left(  \sigma\right)  }\Pi_{\mathcal{N}\left(
\rho\right)  }\Pi_{\mathcal{N}\left(  \sigma\right)  }\right)  \left[
\log\sigma\right]  \Pi_{\sigma}\right\}
\end{multline}
We now simplify the first three terms and note that the last two are Hermitian
conjugates of the first two:%
\begin{align}
\operatorname{Tr}\left\{  \rho\left[  \log\sigma\right]  \Pi_{\sigma
}\mathcal{N}^{\dag}\left(  \Pi_{\mathcal{N}\left(  \sigma\right)  }%
\Pi_{\mathcal{N}\left(  \rho\right)  }\Pi_{\mathcal{N}\left(  \sigma\right)
}\right)  \Pi_{\sigma}\right\}   &  =\operatorname{Tr}\left\{  \rho\left[
\log\sigma\right]  \mathcal{N}^{\dag}\left(  \Pi_{\mathcal{N}\left(
\rho\right)  }\right)  \right\}  \nonumber\\
&  =\operatorname{Tr}\left\{  \mathcal{N}\left(  \rho\left[  \log
\sigma\right]  \right)  \left(  \Pi_{\mathcal{N}\left(  \rho\right)  }\right)
\right\}  \nonumber\\
&  =\operatorname{Tr}\left\{  U\rho\left[  \log\sigma\right]  U^{\dag}\left(
\Pi_{\mathcal{N}\left(  \rho\right)  }\otimes I_{E}\right)  \right\}
\nonumber\\
&  =\operatorname{Tr}\left\{  \Pi_{U\rho U^{\dag}}U\rho U^{\dag}U\left[
\log\sigma\right]  U^{\dag}\left(  \Pi_{\mathcal{N}\left(  \rho\right)
}\otimes I_{E}\right)  \right\}  \nonumber\\
&  =\operatorname{Tr}\left\{  U\rho U^{\dag}U\left[  \log\sigma\right]
U^{\dag}\right\}  \nonumber\\
&  =\operatorname{Tr}\left\{  \rho\left[  \log\sigma\right]  \right\}  ,\\
\operatorname{Tr}\left\{  \rho\Pi_{\sigma}\mathcal{N}^{\dag}\left(  \left[
\log\mathcal{N}\left(  \sigma\right)  \right]  \Pi_{\mathcal{N}\left(
\sigma\right)  }\Pi_{\mathcal{N}\left(  \rho\right)  }\Pi_{\mathcal{N}\left(
\sigma\right)  }\right)  \Pi_{\sigma}\right\}   &  =\operatorname{Tr}\left\{
\rho\mathcal{N}^{\dag}\left(  \left[  \log\mathcal{N}\left(  \sigma\right)
\right]  \Pi_{\mathcal{N}\left(  \rho\right)  }\right)  \right\}  \nonumber\\
&  =\operatorname{Tr}\left\{  \mathcal{N}\left(  \rho\right)  \left[
\log\mathcal{N}\left(  \sigma\right)  \right]  \Pi_{\mathcal{N}\left(
\rho\right)  }\right\}  \nonumber\\
&  =\operatorname{Tr}\left\{  \mathcal{N}\left(  \rho\right)  \left[
\log\mathcal{N}\left(  \sigma\right)  \right]  \right\}  ,\\
\operatorname{Tr}\left\{  \rho\Pi_{\sigma}\mathcal{N}^{\dag}\left(
\Pi_{\mathcal{N}\left(  \sigma\right)  }\left[  \log\mathcal{N}\left(
\rho\right)  \right]  \Pi_{\mathcal{N}\left(  \rho\right)  }\Pi_{\mathcal{N}%
\left(  \sigma\right)  }\right)  \Pi_{\sigma}\right\}   &  =\operatorname{Tr}%
\left\{  \rho\mathcal{N}^{\dag}\left(  \left[  \log\mathcal{N}\left(
\rho\right)  \right]  \Pi_{\mathcal{N}\left(  \rho\right)  }\right)  \right\}
\nonumber\\
&  =\operatorname{Tr}\left\{  \mathcal{N}\left(  \rho\right)  \left(  \left[
\log\mathcal{N}\left(  \rho\right)  \right]  \Pi_{\mathcal{N}\left(
\rho\right)  }\right)  \right\}  \nonumber\\
&  =\operatorname{Tr}\left\{  \mathcal{N}\left(  \rho\right)  \left[
\log\mathcal{N}\left(  \rho\right)  \right]  \right\}  .
\end{align}
This then implies that the following equality holds%
\begin{equation}
\left.  \frac{d}{d\alpha}\widetilde{Q}_{\alpha,1}\left(  \rho,\sigma
,\mathcal{N}\right)  \right\vert _{\alpha=1}=-\operatorname{Tr}\left\{
\mathcal{N}\left(  \rho\left[  \log\sigma\right]  \right)  \right\}
+\operatorname{Tr}\left\{  \mathcal{N}\left(  \rho\right)  \left[
\log\mathcal{N}\left(  \sigma\right)  \right]  \right\}  -\operatorname{Tr}%
\left\{  \mathcal{N}\left(  \rho\right)  \left[  \log\mathcal{N}\left(
\rho\right)  \right]  \right\}  .\label{eq:last-deriv}%
\end{equation}
Putting together (\ref{eq:limit-a-1-exp}), (\ref{eq:chain-break-up}),
(\ref{eq:Q_beta_1_term}), and (\ref{eq:last-deriv}), we can then conclude the
statement of the theorem.
\end{proof}

\section{Petz recovery map}

\label{app:Petz-channel}In this appendix, we explicitly show the well known
fact that the Petz recovery map is a completely-positive,
trace-non-increasing, linear map. Let $\rho$, $\sigma$, and $\mathcal{N}$ be
as in Definition~\ref{def:rho-sig-N}\ and let $\mathcal{R}_{\sigma
,\mathcal{N}}^{P}$\ be the Petz recovery map as defined in
\eqref{eq:Petz-channel-Rel-ent}. Then $\mathcal{R}_{\sigma,\mathcal{N}}^{P}$
is clearly linear, and it is completely positive because it consists of a
composition of the following three completely positive maps:%
\begin{equation}
\left(  \cdot\right)  \rightarrow\left[  \mathcal{N}\left(  \sigma\right)
\right]  ^{-1/2}\left(  \cdot\right)  \left[  \mathcal{N}\left(
\sigma\right)  \right]  ^{-1/2},\ \ \ \left(  \cdot\right)  \rightarrow
\mathcal{N}^{\dag}\left(  \cdot\right)  ,\ \ \ \left(  \cdot\right)
\rightarrow\sigma^{1/2}\left(  \cdot\right)  \sigma^{1/2}.
\end{equation}
It is trace non-increasing because%
\begin{align}
\operatorname{Tr}\left\{  \sigma^{1/2}\mathcal{N}^{\dag}\left(  \left[
\mathcal{N}\left(  \sigma\right)  \right]  ^{-1/2}\left(  X\right)  \left[
\mathcal{N}\left(  \sigma\right)  \right]  ^{-1/2}\right)  \sigma
^{1/2}\right\}   &  =\operatorname{Tr}\left\{  \sigma\mathcal{N}^{\dag}\left(
\left[  \mathcal{N}\left(  \sigma\right)  \right]  ^{-1/2}\left(  X\right)
\left[  \mathcal{N}\left(  \sigma\right)  \right]  ^{-1/2}\right)  \right\}
\nonumber\\
&  =\operatorname{Tr}\left\{  \mathcal{N}\left(  \sigma\right)  \left(
\left[  \mathcal{N}\left(  \sigma\right)  \right]  ^{-1/2}\left(  X\right)
\left[  \mathcal{N}\left(  \sigma\right)  \right]  ^{-1/2}\right)  \right\}
\nonumber\\
&  =\operatorname{Tr}\left\{  \left[  \mathcal{N}\left(  \sigma\right)
\right]  ^{-1/2}\mathcal{N}\left(  \sigma\right)  \left[  \mathcal{N}\left(
\sigma\right)  \right]  ^{-1/2}X\right\}  \nonumber\\
&  =\operatorname{Tr}\left\{  \Pi_{\mathcal{N}\left(  \sigma\right)
}X\right\}  \nonumber\\
&  \leq\operatorname{Tr}\left\{  X\right\}  .
\end{align}

\bibliographystyle{alpha}
\bibliography{Ref}

\end{document}